\documentclass[sigconf,screen]{acmart}
\usepackage{hyperref}
\AtBeginDocument{%
  }

\copyrightyear{2026}
\acmYear{2026}
\setcopyright{cc}
\setcctype{by}
\acmConference[SPAA '26]{38th ACM Symposium on Parallelism in Algorithms and Architectures}{July 06--10, 2026}{London, United Kingdom}
\acmBooktitle{38th ACM Symposium on Parallelism in Algorithms and Architectures (SPAA '26), July 06--10, 2026, London, United Kingdom}
\acmDOI{10.1145/3816782.3819198}
\acmISBN{979-8-4007-2761-0/2026/07}

\ccsdesc[500]{Theory of computation~Concurrent algorithms}

\usepackage[T1]{fontenc}
\usepackage[utf8]{inputenc}

\usepackage{listings, fancyvrb}
\usepackage{pifont}
\usepackage{amsmath, amsthm}
\usepackage[inline]{enumitem}
\usepackage{nicefrac}       % compact symbols for 1/2, etc.
\usepackage{graphicx}
\usepackage{subcaption}
\usepackage{booktabs}  % For professional tables
\usepackage{adjustbox} % For image alignment
\usepackage{array}
\usepackage{makecell}
\usepackage{textgreek}

\usepackage{wrapfig}

\usepackage[toc,page]{appendix}

\usepackage{microtype}
\newif\ifdraft
\draftfalse %ready mode

\newcommand{\myparagraph}[1]{\vspace{.03in}\noindent {\bf #1.}}
\usepackage[framemethod=TikZ]{mdframed}
\mdfdefinestyle{listingstyle}{%
  outerlinewidth=0.25pt,outerlinecolor=black,%
  innerleftmargin=5pt,innerrightmargin=5pt,innertopmargin=0pt,innerbottommargin=0pt%
}

\makeatletter
\renewenvironment{proof}[1][\proofname]{%
  \par\pushQED{\qed}%
  \normalfont
  \topsep=0pt
  \partopsep=0pt
  \trivlist
  \item[\hskip\labelsep
        \itshape
    #1\@addpunct{.}]\ignorespaces
}{%
  \popQED\endtrivlist\@endpefalse
}
\makeatother

\usepackage[T1]{fontenc}
\usepackage{caption}
\usepackage{float}
\ifdraft
        \newcommand{\Zak}[1]{{\color{red} Zak: #1}}
        \newcommand{\Andre}[1]{{\color{blue} Andre: #1}}
        \newcommand{\Guy}[1]{{\color{green} Guy: #1}}
        \newcommand{\hedit}[1]{{#1}}
\else
        \newcommand{\Zak}[1]{}
        \newcommand{\Andre}[1]{}
        \newcommand{\Guy}[1]{}
        \newcommand{\hedit}[1]{}
\fi

\newcommand{\ustm}{\textsc{\textmu STM}}
\newcommand{\twoplsf}{2PLSF}
\newcommand{\fuse}{\textsc{fuse}}
\newcommand{\multiverse}{\textsc{Multiverse}}

\newcommand{\future}[1]{} % things to do in the future

\makeatletter
\lst@Key{countblanklines}{true}[t]%
{\lstKV@SetIf{#1}\lst@ifcountblanklines}

\lst@AddToHook{OnEmptyLine}{%
	\lst@ifnumberblanklines\else%
	\lst@ifcountblanklines\else%
	\advance\c@lstnumber-\@ne\relax%
	\fi%
	\fi}
\lst@AddToHook{OnEmptyLine}{\vspace{-0.3\baselineskip}}
\makeatother

\newtheorem{theorem}{Theorem}[section]

\author{Zachary Kent}
\affiliation{%
        \institution{Carnegie Mellon University}
	\country{Pittsburgh, PA, USA}
}
\email{zkent@cs.cmu.edu}

\author{Guy E. Blelloch}
\affiliation{%
         \institution{Carnegie Mellon University}
  	\country{Pittsburgh, PA, USA}
}
\email{guyb@cs.cmu.edu}

\author{Andr\'{e} Costa}
\affiliation{%
        \institution{Carnegie Mellon University}
	\country{Pittsburgh, PA, USA}
}
\email{ajcosta@cs.cmu.edu}

\keywords{software transactional memory, concurrent algorithms}

\begin{document}
\title[\ustm: A Lightweight and Efficient STM]{
  \ustm: A Lightweight and Efficient STM Supporting\\ General Types and Deferred Aborts
}

\begin{abstract}
Software Transactional Memory (STM) systems allow developers to more easily exploit multicore architectures by wrapping arbitrary sequential code in \textit{transactions} that are executed concurrently. In recent years, the performance of STM systems has approached that of hand-tuned data structures through techniques that avoid unnecessary aborts and exploit the semantics of underlying data structures. 

Despite achieving excellent performance, most STM systems do not fully
address the concerns they targeted in the first place: safety,
usability, and generality.  In particular, these systems place
restrictions on the data types that may be updated transactionally,
such as requiring that these types fit within a word, and can require
modification of data layout.  Moreover, most STM systems abort
transactions in the middle of client code to ensure
correctness.  This can cause space leaks and other bugs not present in
the original code.

We present \ustm{}, a novel STM system addressing all of these
shortcomings while still maintaining excellent performance, all within
$\sim$300 lines of code. \ustm{} supports general types while maintaining
data layout.  Aborts are \textit{deferred} until the end of the
transaction, allowing client code within a transaction to terminate
normally. To ensure that \ustm{} guarantees opacity, we implement a novel timestamping algorithm we call \textit{split-increment timestamps}.

We compare the performance of \ustm{} to a variety of state-of-the-art (SOTA) STM systems, demonstrating that \ustm{} matches or outperforms the SOTA on a
variety of workloads.
\end{abstract}

\maketitle
\section{Introduction}

Transactional memory (TM) allows developers to wrap code in
transactions such that all accesses to shared memory within the code
appear as if they happen atomically even when other threads are
concurrently accessing the memory.  TM has long been suggested as a
way to greatly simplify concurrent programming on shared-memory
multicore architectures.  However, early libraries for transactional
memory did not perform well and had various restrictions that made
them difficult to use in practice~\cite{CBMCWCC08}.  Over the
years there have been many advances that have improved performance.
Such improvements include removing levels of indirection, supporting
opacity~\cite{tl2}, more efficient locks~\cite{2plsf}, better
contention management~\cite{contention09}, efficient
multiversioning~\cite{lu2013generic,neumann2015fast,Perelman11,wu17},
efficient timestamping~\cite{tl2,RC24}, and efficient memory
management~\cite{lu2013generic,WBFR23}.  Software Transactional
Memory (STM) libraries are in many cases quite efficient, and
experiments have shown that data structures based on STM libraries can
approach the efficiency of hand-coded concurrent data structures in
several situations~\cite{RC24,tlf25}.

Such modern STM libraries, however, still have notable limitations
with regards to general usage.  Limitations include requiring
indirection for anything other than trivial types, relying on unsafe
long jumps (unsafe aborts), requiring modification of the underlying
data structures (intrusive), performing badly under high contention
(contention intolerant), and requiring all reads and writes to STM
variables to be in a transaction (no publication or privatization).
Table~\ref{table:others} summarizes these limitations across a variety
of state-of-the-art STM libraries.  More details on these limitations
and implications are given in Section~\ref{sec:overview}.

\newcommand{\tc}{\ding{52}}
\newcommand{\tx}{\ding{56}}
\newcommand{\tq}{\ding{182}}

\begin{table*}
  \begin{tabular}{|l|c|c|c|c|c|c|c|}
    \hline
    System &
    \parbox{.5in}{\center General Types} &
    \parbox{.5in}{\center Safe Aborts} &
    \parbox{.5in}{\center Non-intrusive} &
    \parbox{.6in}{\center Contention Tolerant} &
    \parbox{.7in}{\center Privatization Safety} &
    \parbox{.6in}{\center Multi\\ Versioned} &
    \parbox{.6in}{\center Starvation Free} \\ \hline
        
%                  & gt  & sa  & ni  & ct  & ps  & mv  & sf  \\ \hline
    \sc TinySTM~\cite{tiny08}    & \tx & \tx & \tx & \tx & \tx\rlap{$^2$} & \tx & \tx \\ \hline
    \sc tl2~\cite{tl2}        & \tx & \tx & \tx & \tx & \tx\rlap{$^2$} & \tx & \tx \\ \hline
    \sc \twoplsf{}~\cite{2plsf}   & \tx & \tx & \tx & \tx & \tc & \tx & \tc \\ \hline
    \sc dctl~\cite{RC24}      & \tx & \tx & \tx & \tx\rlap{$^1$} & \tx\rlap{$^2$} & \tx & \tx\rlap{$^3$}  \\ \hline
    \sc multiverse~\cite{Coccimiglio0R26} & \tx & \tx & \tx & \tx & \tx\rlap{$^2$} & \tc & \tx \\ \hline
    \sc \fuse{}~\cite{tlf25}       & \tx & \tc\rlap{$^{4}$} & \tx & \tc & \tc & \tc & \tx \\ \hline
    \sc\bf \ustm{} (this paper)   & \tc & \tc & \tc & \tc & \tc & \tc & \tx \\ \hline
  \end{tabular}
  \vspace{.1in}
  \caption{Properties of various STM systems.  All these systems
    support opaque serializable transactions, and support allocation
    and frees within a transaction. The last three are all from within
    the past three years.
    (1) The contention status of DCTL is unknown, as its implementation is proprietary.
    (2) We discuss how these STMs could efficiently support privatization
    safety in the full version of this paper~\cite{fullpaper}.
    (3) DCTL has a version that is starvation free.
    (4) For safe aborts \fuse{} requires hardware timestamps.
    }
  \label{table:others}
  \vspace{-15pt}
\end{table*}
In this paper we present \ustm, a header-only STM library addressing each of these limitations.
For this purpose we introduce new techniques, including
\emph{deferred aborts}, \emph{split-increment timestamps}, and
\emph{allocate-swap-retire}.
\ustm{} uses multiversioning~\cite{Reed78} and optimistic concurrency
control~\cite{KungR81}, where a speculative phase runs the user code
buffering any writes, and a commit phase validates and executes the
writes if successful.
Our experiments show that \ustm{} is the fastest publicly available
STM across a broad set of workloads.
%, including workloads under high contention.
Furthermore our STM consists of only around 300 lines of pure C++
code (measured by \texttt{cloc}), plus another 150 lines for an epoch-based reclamation
scheme\footnote{\url{https://github.com/cmuparlay/ustm}}.  \ustm{}
should therefore be reasonably easy to adapt and extend.  Here we
briefly go through the techniques we introduce and how they help
alleviate the limitations; more details are given in Section~\ref{sec:overview}.

Our deferred aborts are used to avoid any exceptional control flow in
user code.  Most prior STMs rely on unsafe system longjumps to exit
user code.  These longjumps, or other exceptional control, are
necessary in the systems to ensure opacity~\cite{opacity08} (i.e.,
even aborted transactions must see a snapshot of the state).  Instead,
our deferred aborts continue running the user code until finished, and
then abort, if needed, before committing any writes.  The challenge is
to maintain opacity.  We support this with a combination of
multiversioning and a new form of efficient timestamping which we
refer to as split-increment timestamps.  These timestamps solve a
problem with prior methods for efficient timestamps, e.g. used in
TL2~\cite{tl2}, Verlib~\cite{BlellochW24}, and DCTL~\cite{RC24}, while
maintaining their efficiency.  In particular, these prior methods do
not allow deferred aborts since they cannot capture a snapshot for
aborted transactions even with multiversioning.

%% The immediate retire is safe since in
%% conjunction with safe memory reclamation the cell will not be
%% freed until all transactions that are live at the time of the retire,
%% and hence might need an old version, have completed.  

Our allocate-swap-retire approach is used for two purposes: supporting
multiversioning and non-intrusive indirection-free general types.  The
idea of the approach is that during a transaction when a value is
stored, we allocate a cell to hold it and add a pointer to the cell to
a per-transaction write log.  Later when committing the transaction we
swap the value in the cell with the value in the location, immediately
retire the cell, and then link the cell into a version list, which is
stored separately in a hash table.  A key efficiency is achieved by
using the same mechanism for supporting general types and
multiversioning---multiversioned STMs have to allocate such cells in
any case.  Importantly, the approach is non-intrusive since we just
use the original layout of the data and all meta information (locks
and version lists) is kept separately.  Care is required to
safely read locations.

%% Reads in a transaction can
%% access the value directly, although if in a transaction they do have
%% to check for consistency in the hash table.

Beyond the above-mentioned approaches \ustm{} uses variants on
standard approaches to achieve functionality and performance.  It
supports publication and privatization safety using fences~\cite{safeprivatization},
supporting both global and per-variable fences (see the full version~\cite{fullpaper}).
%% Importantly since
%% \ustm{} is non-intrusive, when the structure is accessed outside of
%% transactions, and with appropriate publication and privatization
%% fences, it can be accessed as a standard sequential structure.
With regards to reducing contention costs,
our allocate-swap-retire and split-increment timestamps both 
improve contention costs, as discussed later.
% On the other hand, \ustm{} is not starvation free.

We have implemented \ustm{} in pure C++, and
make no use of longjumps or exceptions.  We have tested
portability across x86, ARM and PowerPC processors.
\ustm{} was designed to be easily incorporated into
existing C++ code and extended.  Its public interface is given in
the full version~\cite{fullpaper}.

In the experimental section we compare performance to several existing
STM systems.  We compare on eight different data structures supporting
dictionaries using YCSB-like workloads, and on the widely used TPC-C
benchmark suite.  The workloads vary table size, update rates,
contention (using a zipfian distribution), transaction size, and thread
counts.  We report both on geometric mean across workloads, and graphs
for each parameter.  To see the advantage of multiversioning we also
report on range query performance.  We outperform other systems in
almost all scenarios.

To better understand the effect of our split-increment timestamps we
compare across a variety of timestamp approaches including hardware
stamps (only available on Intel x86 machines), eager stamps, and lazy
stamps (only safe with longjumps).  To better understand the cost of
our deferred aborts, and allocate-swap-retire technique, we do
an ablation study where we compare two versions of \ustm{} that use
longjumps instead of deferred aborts, and one that additionally only
supports single-version trivial types and does not need any allocation
on writes.  These stripped down versions do perform slightly better on
the structures they can deal with, but with much less functionality.

\section{Overview and Related Work}
\label{sec:overview}

In this section we describe why the features we support are important, present more
detail on how we address the issue, and how it relates to prior work.

\subsection{Multiversion and Optimistic Transactions}

Multiversion transactional systems date back to the
1970s~\cite{Reed78} and are widely used in both transactional database
systems and transactional
memory~\cite{Reed78,BG83,papadimitriou1984concurrency,perelman2010maintaining,Postgres12,SQL13,Kumar14,neumann2015fast,wu17,LKA17,riegel2006lazy,cachopo2006versioned,DieguesR15,Perelman11,FernandesC11,Coccimiglio0R26,tlf25}.
Multiversioning keeps old versions of values when overwritten,
typically in per-location version lists, so ongoing transactions can
make use of them.  Their main advantage is allowing large read-only
transactions to proceed concurrently with updating transactions, and in
several of these systems the read-only transactions never abort.  We also
use multiversioning to support deferred aborts.

Optimistic concurrency control dates back to the early
1980s~\cite{KungR81} and is used in the majority of both transactional
database systems and transactional memory systems (a citation list
would be too long but Harris, Larus and Rajwar describe
several such systems~\cite{HLR10}).  The idea is to split a transaction into a
speculative phase that runs the user code and a commit phase to check
for consistency by validating the reads, and if successful committing
the writes.  This compares to pessimistic concurrency, which takes
locks on all memory accesses during the user code and never has to
validate.  The advantage of optimistic concurrency is that it does
not require any locks on the reads.  However it is the lack of
read locks that makes handling non-trivial types difficult.
 
\subsection{General Types and Non-intrusion}

General-types and non-intrusion are two related but orthogonal concepts
requiring that the STM system support transactions over general types without
data-layout changes to these types. Most existing STMs fail these requirements.
For example, many require that the underlying type fit into a word, or
further that the type itself be a pointer (\texttt{T*} for some type \texttt{T}).
Some STMs further require \emph{intrusive data layout changes} to this type \texttt{T},
requiring that it inherit from a base class provided by the STM. For example, \fuse{}
requires that types \texttt{T} inherit from the \texttt{fuse::versioned} struct.

These characteristics have implications for both performance and usability.
In particular, we are concerned with transactions that access inlined objects,
i.e., objects that are placed adjacently in memory within the
containing structure.  Boxed objects, on the other hand, are ones that
are stored in the heap with a pointer to them in the containing
structure.  In languages such as C++ and Rust all objects are inlined,
while in other languages, such as Java and Haskell, the compiler
decides when to inline~\cite{boxing95,Olsson2022,HengleinJ94}.  Inlining can
greatly reduce space and time, since it avoids allocations and a level
of indirection, which can often incur a cache miss.  Consider, for
example, an array of 16 4-byte objects.  Assuming 64 byte cache lines,
this would require a single cache line if stored inlined, but
significantly more if stored boxed.  Furthermore, scanning the array
values would touch one cache line inlined instead of 17 if boxed.

Boxed objects are relatively easy to support in an STM if the
STM supports memory allocation within a transaction.  This is because
pointers fit in a single word and can be atomically updated with
hardware instructions.  Hence general types can be supported in most
STMs, by creating a wrapper that boxes objects.  The wrapper would
convert an inlined store of an object, for example, to an indirect
one, by allocating a ``box'' in which to put the new object, reading
the pointer to the current box and retiring it, and writing in the
pointer to the new box.  The boxing, however, will incur significant
cost both for loads (reading indirectly) and stores.  Furthermore
boxing objects would be intrusive requiring changes to the existing
data layout.  Unless integrated with a compiler, these changes are likely to
make the structure incompatible with any existing
code that uses it.

Unlike boxed objects, supporting non-trivial inlined objects in STMs
can be tricky, especially with optimistic concurrency control.  Much
of the efficiency of optimistic STMs comes from avoiding taking locks
when reading.  Instead they detect read-write conflicts later during
an opacity check or during validation.  For non-trivial types this can
lead to situations where the readers see partially written values.
Another issue is that the write logs need to store arbitrary objects
so that writes can be buffered.

Our allocate-swap-retire approach aims to support inlined objects for
most types in a way that can be used safely in an optimistic
STM. Furthermore, the approach also supports multiversioning with the
same mechanism, avoiding a double cost.  The approach requires that
the type is copyable and relocatable.  This second condition has been
discussed in the C++
community~\cite{relocatableODwyer24,relocatableMeredith25} and
effectively means that two inlined objects can be swapped or moved by just
swapping or moving their bytes.  This is true for most implementations of C++
classes, including e.g., std::vector.

Allocate-swap-retire works, roughly, as follows.  Each store
in user code, which is run during the speculative phase, allocates an
object to hold a copy of the stored value, and adds a pointer to the
object to the write log.  Later, during the
commit phase, if the transaction succeeds, we use a bytewise swap of
the current value (in the location) and the new value (in the
allocated object).  The object is then tagged with the
version number of the transaction, added to a version list for the
location, and retired.  To be non-intrusive, we store the version lists elsewhere
by keeping a fixed number of buckets, and hashing the location's
address to one of these buckets to store the version~\cite{lu2013generic}.
This is also where we store locks.
A version list can therefore be associated with multiple location
addresses.  We prove that the immediate retiring is safe
(Theorem~\ref{theorem:memsafe}).
Since the cells are
short-lived, they are recycled quickly and are ``warm'' in the cache
for reuse (assuming a decent memory allocator with thread local
pools).  

On a load, the allocate-swap-retire idiom needs to properly load the
value even though another thread could be concurrently updating it.
To implement this we use a variant of sequence locks~\cite{Hemminger02,Lameter05,Boehm12,Sullivan17} and can
take advantage of the timestamp already used for multiversioning.  In
particular the read first reads the timestamp from the head of the
hashed version list, copies the bytes of the type to a buffer,
and then reads the timestamp again.  It checks that the two timestamps
are equal, are not locked, and are less than the start stamp
associated with the transaction.
If so it has properly read the bytes and it can now copy the value out
of the buffer (using e.g., a copy constructor in C++) and return the
copy.  Note this is where being relocatable is important since it
must be the case that the copy constructor acts equivalently whether
the value is in its original location or the buffer.  If the
timestamps are not equal or locked we repeat.  If the timestamps are
not less than the start stamp, then we traverse the version list
searching for the correct version.  The details of this mechanism, and
how it handles the other cases are described in Section~\ref{sec:algorithm}.

\subsection{Deferred Aborts}

Most TM systems we know of use longjumps within user code to implement
aborts (e.g., \twoplsf{}~\cite{2plsf}, TinySTM~\cite{tiny08}, TL2~\cite{tl2}, DCTL~\cite{RC24}, Multiverse~\cite{Coccimiglio0R26},
Trinity~\cite{ramalhete2019onefile}, tl4x~\cite{ACRSF23}). Some form
of exceptional control flow (either long jumps, exceptions, or having
users thread the errors themselves) is required to support
opacity~\cite{opacity08} in single-version systems.  This is because
the user might load a variable that has been updated since its
transaction started and hence be inconsistent with prior reads.  In
principle this problem can be alleviated in a multiversion system
since the load could retrieve the value valid at the start of the
transaction presenting user code with a snapshot of the state.  In
practice, however, this is more difficult, as discussed below, but
let's start with why longjumps and other exceptional control flow are
bad for general use.

A longjump~\cite{Clanguage} works, roughly, by saving the register
state at a given point in the code and then allowing the user to
``jump'' back to that point by restoring that state.  The jump could
pop up many layers of function calls.  This is extremely dangerous in
the RAII style of programming~\cite{raii} of C++ or Rust since none of
the destructors on the stack will be called, potentially leaking
memory, leaving streams unclosed, or locks held (although we hope
users do not put locks in a transaction).  Indeed we found that our
B-tree code had a memory leak when used with many of the systems we
experimented with since the constructor for a node copied from another
node using transactional loads.  If one of these loads aborted, the
memory for the new node would not be collected.  Even more dangerous,
and used by \twoplsf{}, is to add the object to a retire-on-abort list
before calling the constructor.  This would destruct the node on
abort, but the node could only be partially filled when it takes the
longjump so the destructor is later applied to an inconsistent state.

Using exceptions in C++ (or other languages) is much safer since they
``unravel'' the stack when an exception is thrown, applying all
destructors on the way up the stack to the catch point.  Exceptions,
however, have the opposite problem---to perform correctly they require
judicious use of RAII programming.  This requires, for example,
replacing all raw pointers with smart pointers.  Additionally, since
exceptions are designed for uncommon cases, they are expensive when
actually thrown.  However, aborts are not necessarily exceptional---in
some of our high-contention benchmarks we get 30x more aborts than
successes.  Furthermore, replacing pointers with smart shared pointers
can be extremely costly in a concurrent environment since concurrent
reads would contend on incrementing the reference
counter~\cite{AndersonBW21}.  We have hence never seen data structures
designed for transactional memory that use smart pointers.
The last option is to have users thread the errors ``up the stack'',
but this is also not a satisfactory solution.
%% themselves, tagging all return values as a possible error, and running
%% a test after every function call to check if the result is an error
%% and immediately return the error up one more level if so.  Since rust
%% and Haskell do not have exceptions they use this approach and add
%% syntactic sugar and combinators to make it easier.  However, the
%% approach can require significant restructuring of code, and extremely
%% messy without the proper support.

\subsection{Split Timestamps}

All methods to abort user code in the middle have
significant problems, at least for general use.  \ustm{} therefore
runs user code to completion, but, as mentioned, this requires that
the user sees a consistent snapshot even if it aborts, requiring, at
least, multiversioning.  The problem is that all the multiversioning
systems require maintaining timestamps.  As has been noted by many,
incrementing timestamps on every transaction is prohibitively
expensive~\cite{tl2,RC24,BlellochW24,LKA17,YPSD16,wu17}.  Therefore
all practical systems we are aware of use some form of lazy or
imprecise timestamp~\cite{tl2,BlellochW24,RC24,LKA17,YPSD16}.

\begin{figure}
  \includegraphics[width=\columnwidth]{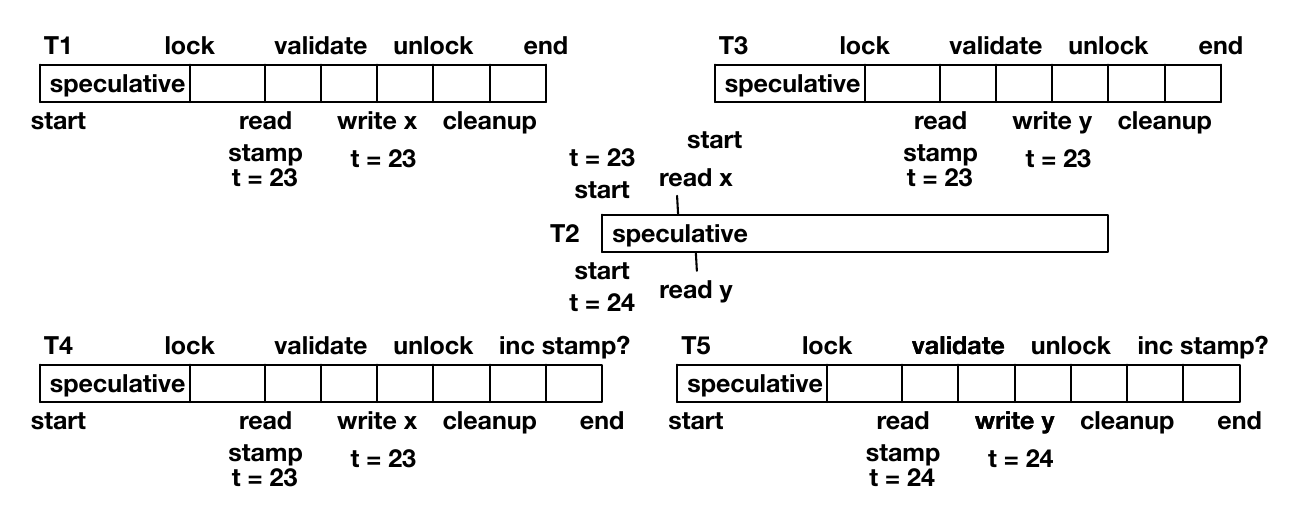}
  \caption{Example of lazy vs. split-increment stamps.  Time from left to right. T1 and T3 use
    lazy stamps.  With them there is no way for T2, which starts at $t = 23$, to determine which
    version of $x$ to read.   To be strictly serializable it must read from T1, but
    it cannot see the $y$ from T3 since it has not yet happened, but the write of $y$ happened at the ``same time'' as the write to $x$.  To be opaque, T2 must abort on reading $x$, increment the stamp, and restart.   T4 and T5 use split-increment stamps.  In this case T2 starts at $t = 24$
    and it is safe for it to use the value of $x$ from T4.  T4 will not need to increment the stamp if another transaction has incremented it since its read stamp.}
  \label{fig:timestamp}
\end{figure}

The idea of a lazy timestamp~\cite{tl2,BlellochW24,RC24} is that
timestamps are not incremented when the transaction is successful, but
are when they fail (in the case of TL2~\cite{tl2} they are sometimes
incremented when successful).  Instead, these systems detect when
reading a value that the timestamp ordering is ambiguous---in
particular that there is no way to properly order an update that is
read relative to the ongoing transaction.  If such an ambiguous
ordering is detected, the transaction must abort immediately (i.e.,
with exceptional control flow) to preserve opacity.  This is true even
in a multiversion TM since the system cannot decide which version to
use.  This problem also occurs with imprecise
timestamps~\cite{LKA17,YPSD16} and seems inherent with all relaxed
timestamp approaches.  In addition to forcing exceptional control flow
(e.g., a longjump) in the middle of user code, it can force read-only
transactions to abort even when using multiversioning.

We introduce split-increment timestamps to avoid this problem.  As
with lazy stamps, they typically avoid increments, but they ensure
that the ordering of a read is never ambiguous.  This allows the
system to defer the aborts, and also avoids any aborts on read-only
transactions.  The idea is to read the stamp early during the commit
phase, and then increment it at the very end of the transaction (after
all locks are released and cleanup is complete), but only if it has not
been incremented by another transaction in the meantime.  Under high
contention on the clock most transactions do not need to do the
increment since some other thread has incremented the stamp in the
meantime.  The correctness is subtle.  We prove that this is safe
(Section~\ref{sec:correctness}) and show experimentally that it is
efficient---not quite as efficient as lazy stamps, but much more
efficient than eager stamps.   Figure~\ref{fig:timestamp} illustrates
the problem with lazy stamps, and how split-increment stamps avoid
the problem.

\subsection{Privatization and Contention}

An issue that is understood in the
literature~\cite{SMDS07,dice2010implicit,safeprivatization}, but not
commonly addressed by existing STM systems is the interoperability of
transactional and non-transactional code.  In particular, user
programs might require that variables previously accessed inside a
transaction be used outside of the STM context (privatization) or vice
versa (publication).  The question is how can an STM provide
privatization (and publication) safety.  This is more of a problem
with optimistic systems than pessimistic ones~\cite{safeprivatization}.
Based on ideas of Khyzha, Attiya, Gotsman and Rinetzky
(KAGR)~\cite{safeprivatization} we supply fence operations.  In
addition to a global fence suggested by KAGR, we supply a per location
fence.  This is discussed further in the full version~\cite{fullpaper}.

With regards to contention, there are several features of \ustm{} that
are designed to improve performance under high contention.  Firstly
\ustm{} aims to minimize the work that is performed in the critical
region in the commit phase when locks are taken.  Under high
contention, the critical regions sequentialize and hence reducing the
time in the region reduces the critical path of the computation.  To
this end, we ensure that no memory management is performed in
the critical region.  In the allocate-swap-retire approach the
allocate is performed before the critical region and the retire after.
All user allocations and deletes are performed in the speculative phase.
Also, with split-increment timestamps any increments of the stamp are
performed outside of the critical region.  Within the critical region
we only read the stamp.

Secondly, we use try locks with early validates and aborts.  In
particular, before even trying to take a lock we check that the
location is still valid and abort if not.  Although not strictly
necessary for correctness, in practice most locations that are written
are also read.  This means that if the validate on a write location
fails, the transaction is most likely to abort during the read validations.
Hence, taking the lock was a waste, possibly delaying other threads.
Using try locks instead of strict locks has a similar benefit.  If a
lock is busy when encountered, the transaction with the lock will
update the location.  Hence the transaction that sees the busy lock
will, again, likely abort due to a validation failure on the location.

%% \subsection{Minimally Intrusive Instrumentation}

%% We would like to minimize changes to user code.  In particular we don't want
%% to modify any types on the data structures, and ideally just have
%% to annotate any transactional reads or writes with a transactional version.

%% Consider, for example, a graph data structures and an algorithm that
%% chooses vertices (perhaps randomly) and updates each so its value is
%% calculated based on the values of its neighbors.  Now someone notices
%% that most of these updates are independent (only if two neighbors
%% update at the same time do we have a conflict).  Ideally we would like
%% to write a short piece of code that allows for atomic concurrent
%% updates.  We do not want to change the data structure in any way
%% (there are 10s of thousands of lines of code that use it), but we could
%% be willing to annotate loads and stores with an stm version.
%% For example:

%% \begin{verbatim}
%% transaction([&] {
%%   T sum = T::identity;
%%   for (auto y : v.neighbors())
%%   sum = f(sum, stm::load(y.val));
%%   stm::store(v.val, sum);
%% });
%% \end{verbatim}

%% Here we just added a stm::load and stm::store.  The problem is that
%% this will not work with any of the efficient TMs we are aware of, at
%% least not for general types T.  The issue is that prior systems either
%% require that one modify the types of value elements, or are limited to
%% loading and storing trivial types, and only if they fit in 64-bits.
%% In general the type T could be a pair of longs (requiring 128 bits) or
%% even include a vector (which has a destructor).

\section{Algorithm}
\label{sec:algorithm}

Here we describe our algorithm.  We first describe the data structures
we use and then how we implement the various operations.  We present
pseudocode in this section, and the full C++ code is given in the
appendix.  They do not match exactly since we can make some
simplifications in the pseudocode (e.g. the C++ code has to account
for the fact that the memory is not sequentially consistent).

\begin{figure}[t]
\begin{lstlisting}
struct version_link =
  next        // pointer to next link
  data        // value of this version
  stamp       // timestamp of that link
  loc         // pointer to location versioned by this link

global_stamp = 0

thread_local tid
thread_local late_read = false
thread_local start_stamp = load(global_stamp)

next_stamp(prev_stamp) =
  let (curr_stamp, success) =
    CmpX(global_stamp, prev_stamp, prev_stamp + 1)
  if success return prev_stamp + 1
  else return curr_stamp

enum lock_status =
  // Lock acquired by thread with tid
  | Locked tid
  // Lock not acquired, store most recent timestamp
  | Unlocked stamp

struct lock =
  // whether lock currently acquired, and if so by whom
  status : atomic<lock_status>
  // Version list protected by lock
  verlist : atomic<version_link*>

// cons a new link onto the version list protected by a lock
add_link(lck : lock, link : version_link*, prev_stamp) =
  link->next = load(lck.verlist)
  link->stamp = prev_stamp
  store(lck.verlist, link)

enum lock_result = SelfLocked | Acquired stamp | Failed 

try_lock(lck) : lock_result =
  status = load(lck.status)
  case status of
  | Locked tid' =>
    if tid = tid' then return SelfLocked else return Failed
  | Unlocked stamp =>
    if stamp < start_stamp and 
      CAS(lck.status, status, Locked tid)
      return Acquired stamp
    else return Failed

unlock(lck, stamp) = store(lck.status, Unlocked stamp)
\end{lstlisting}
\vspace{-.1in}
\caption{Versioned Lock Pseudocode and API}
\label{fig:alg:lock}
%\vspace{-5pt}
\end{figure}
\subsection{Data structures}

In contrast to most other multiversioned STMs, \ustm{} allows the client to read and write directly to normal memory locations. That is, the most recent version of every value is stored not in a version list, but rather at the location itself. In a sense, the location is the head of the version list, which is detached from the remainder of the list.

Every location is hashed to a lock that protects that location; this function is not necessarily injective, however, and multiple distinct locations may hash to the same lock. Thus, a single lock may protect multiple locations.

Every lock maintains its status---whether locked or unlocked---plus either the thread id of its owner (if locked), or the timestamp of the transaction that most recently updated some location hashing to that lock (if unlocked).

Every lock also maintains a version list of all previous accessible versions of locations hashing to that lock, sorted in non-increasing order of timestamp. Because these version lists are heterogeneous---multiple locations may hash to the same lock---every version link also maintains not only a value and timestamp, but also the location holding that value at the timestamp. The implementation of these version locks is described in Figure~\ref{fig:alg:lock}.
A transaction descriptor maintains information describing the current state of a transaction. It records various information and statistics, including whether the transaction is read-only, a flag to indicate whether some of the reads may have been out of date, whether the transaction is currently in a constructor, the identifier of the thread, and the start timestamp of the transaction.

Additionally, every transaction descriptor maintains read and write logs. Every read log entry just maintains the location read itself, whereas every write log entry also maintains the timestamp of the previous update to that location and a pointer to a (detached) version link that contains the value written to the location within the transaction.

Finally, every transaction also maintains an \emph{allocation} log that records locations allocated within a transaction (which must be deleted to avoid a memory leak if the transaction is aborted). It also maintains a \emph{delete} log comprising all locations deleted within a transaction---these locations are retired upon commit.

\subsection{Loads}

A load on location $l$ first checks whether $l$ is in the write log; if so, it just returns the corresponding value so that the transaction correctly reads its own writes. Otherwise, it adds the location to the read log so that the read can be validated later. Loads then attempt to read out the current value from the location. Doing so naively---by simply reading the bytes from the location---would be unsafe, as a concurrent writer could update the location during the read. Thus, every load must ensure that no concurrent write occurred during the course of this read. It accomplishes this by first recording the timestamp of the most recent update to the lock protecting $l$, reading the bytes out of $l$, and then examining the timestamp again. If the timestamp has not changed and the lock is not acquired, then no write was concurrent with the read, and the bytes read out were valid. Furthermore, if the timestamp is less than the start timestamp of the transaction, the value read corresponds to the most recent value committed before the transaction began, and thus is the correct value to return.

Otherwise, some other transaction may have updated the location since the start timestamp, and the transaction must eventually be aborted if it is not read-only. We record this fact by setting the \texttt{late\_read} flag rather than aborting immediately. This ensures that our system is indeed ``abort-free'' and waits until the client code completes to abort the transaction.

The load then chases down the version list of the lock for location $l$ to find the most recent version link earlier than the start timestamp $t_i$ that matches the location, and returns the associated value.

\subsection{Stores}

Storing value $v$ to a location $l$ first allocates a new, detached
version link that temporarily stores $v$. Then, a log entry containing
this new version link, the location $l$, and value $v$ is appended to
the write log.  This version link is then immediately added to the
delete log for the transaction.  This at first glance seems unsafe,
but the fact that (1) the delete log is not processed until the
transaction completes, and (2) when processed it is retired rather
than deleted, and (3) our integration of epoch-based memory
reclamation within the system ensures that this location will never be
freed while another transaction is still reading it.  This is
implemented by pseudocode in Figure~\ref{fig:alg:store}.

\subsection{Commit}\label{sec:alg:commit}

We now describe the commit phase for a transaction $T_1$ beginning at timestamp $t_i$. If $T_1$'s write log is empty, there is nothing to do besides retiring all of the locations freed by the client code during the speculative phase. Otherwise, if the transaction performed an ``out of date'' read---that is, some transaction $T_2$ following $T_1$ wrote some location read by $T_1$, then $T_1$ is aborted. Otherwise, the transaction tries to acquire every lock protecting a location in the write log. Acquiring a lock \textit{lck} fails if the most recent update to a location protected by \textit{lck} occurred at a time following $t_i$, or if the lock is already taken. If acquiring any of the locks fails, the transaction aborts.

The transaction then reads the timestamp, which becomes the commit timestamp and is assigned
to all writes if the commit succeeds.
The read set of $T_1$ is then validated. To do so, every lock protecting a location in the read log is inspected. If no update following timestamp $t_i$ has written to some location protected by any of these locks, then the validation succeeds. Otherwise, validation fails, and the transaction aborts.
%
%If all of the locks protecting the write log are acquired, and validation of the read log succeed, then the transaction will successfully commit. The end timestamp is fetched from the global clock, which is where the transaction serializes.
%
If all of the locks are acquired, then the transaction will commit. It first fetches the end timestamp $t_f$. With all of the write locks acquired, the entries in the write log are published globally. When a write log entry for location $l$ is applied, the corresponding value $v$ written during the speculative phase is stored in a detached version link. 
%The data stored within the version link, and that currently stored at location $l$ is thus \textit{swapped} so that the version link contains the previous value, and the location contains the new value.

It then \textit{swaps} the data within the version link and location, so that the version link contains the previous value, and the location itself stores the new value written by the transaction. This version link is prepended to the version list for the lock protecting location $l$ (that is currently held by the transaction). Finally, after all writes are applied, the locks held by the transaction are released and the global clock is incremented if it is still equal to the commit stamp. This algorithm is implemented in pseudocode in Figure~\ref{fig:alg:commit}.

If a transaction is aborted, all locks held by the transaction are released, every allocation in the allocation log is freed, and the global clock is incremented if it is equal to the transaction's start timestamp.

\begin{figure}
\begin{lstlisting}
enum last_update = Self | Other stamp

struct write_log_entry =
  loc       // pointer to location written to
  old_stamp // stamp of previous write to location
  link      // version link to be added for previous write
  size      // size in bytes of data stored to location

thread_local tid

thread_local start_stamp

validate_read(lck) : bool =
  case load(lck.status) of
  | Locked tid' => return tid = tid'
  | Unlocked stamp => return stamp < start_stamp

commit() : bool =
  if read_only
    // Never abort
    return true
  if late_read return false
  for every entry `e` in write log
    case try_lock(lock(e.loc)) of
    | Failed => return false
    | Acquired prev_stamp => e.old_stamp <- Other prev_stamp
    | SelfLocked => e.old_stamp <- Self
  commit_stamp = load(global_stamp)
  for every entry `e` in read log
    // validate all reads
    if not validate_read(lock(e.loc)) then return false
  for every entry `e` in write log
    Swap the bytes of `e.loc` and `e.link.data`
    let prev_stamp = case e.old_stamp of
      | Self => commit_stamp
      | Other prev_stamp => prev_stamp
    add_link(lock(e.loc), e.link, prev_stamp)
  Release all acquired locks
  Retire every location in delete log
  // Ensure global stamp is greater than commit stamp
  next_stamp(commit_stamp)

abort_transaction(stm, desc) =
  Release all acquired locks
  Retire all locations in allocation log
  next_stamp(start_stamp)

\end{lstlisting}
\caption{Transaction Commit and Abort}
\label{fig:alg:commit}
\end{figure}

\subsection{Transactions}\label{sec:alg:txn}

We now describe the process for running a thunk $f$ containing client code within a transaction $T$. First, the start timestamp $t_i$ is fetched. Then, the thunk $f$ is executed. Recall that a \texttt{late\_read} flag is set during transaction execution if the transaction performed an out-of-date read---in particular a transaction serializing after $t_i$ committed a data item that was read by the transaction. If so, and furthermore $T$ is not a read-only transaction, then $T$ is aborted and retried. Otherwise, the transaction attempts to commit according to the logic in Section~\ref{sec:alg:commit}. If this process succeeds, then the transaction has committed its writes (if any). Otherwise, the transaction is aborted and retried.

\subsection{Memory Management}\label{sec:alg:mm}

We employ epoch-based memory reclamation (EBR)~\cite{fraser2004practical} to manage shared pointers. Every thread participating in EBR \textit{announces} when it enters a critical section, and \textit{unannounces} when it exits the critical section. A global epoch approximates real time, and is incremented whenever every thread has announced the current epoch. When a thread retires a memory location, this location is placed into a \textit{limbo list} for the current epoch. A limbo list maintaining retired pointers from the previous epoch is also maintained. When the global epoch is incremented, all of the locations in the oldest limbo list are freed, and the current limbo list becomes that for the previous epoch (which was just incremented). The limbo list for the previous epoch is reset to empty. We employ a custom implementation of EBR \texttt{uepoch} that uses thread-local limbo lists and only frees pointers from epoch at most $e - 3$ where $e$ is the current global epoch. Inside a transaction, an epoch is announced before taking a start timestamp and then unannounced after running client code.

\begin{figure}[t]
\begin{lstlisting}
load(loc) =
  if loc is in write log
    return data stored in associated version link
  let lck = lock(loc)
  Let buf be a temporary buffer
  hd <- nullptr // Pointer to head of version list for lck
  // Timestamp of most recent update to lck
  last_stamp <- -1
  while true do
    initial_status = load(lck.status)
    case initial_status of
    | Locked _ =>
      // version lock acquired, try again
      continue
    | Unlocked stamp =>
      Copy the bytes of loc to buf
      hd <- load(lck.verlist)
      let final_status = load(lck.status)
      if initial_status != final_status
        // lock state changed during read
        continue
      last_stamp <- stamp
      if last_stamp < start_stamp
        // last update serialized before txn start
        return buf
      // Otherwise read was consistent but out of date
      break
  done
  // Another transaction committed a write after we began
  late_read <- true
  // Chase down version list to find correct version
  result <- buf
  while last_stamp >= start_stamp and hd != null
    if hd->loc = loc then result <- hd->data
    last_stamp <- hd->stamp
    hd <- hd->next
  // Deepest matching link holds the correct version
  return result

store(loc, val) =
  // Allocate version link storing new value to write
  let link = new version_link {
    next = nullptr,
    data = val,
    stamp = None,
    loc = loc
  }
  // Alloc-swap-retire: retired when txn completes
  Add link to delete log
  let entry = {
    loc = loc,
    data = val,
    old_stamp = None,
    link = link,
  }
  Add `entry` to write log
\end{lstlisting}
\caption{Transactional Load and Store Pseudocode}
\label{fig:alg:load}
\label{fig:alg:store}
% \end{minipage}
\end{figure}

\subsection{Synchronization Between Memory Management and the Global Clock}

Recall that when a transactional write of value $v$ to location $l$ is performed, a tentative version link with value $v$ is created and then immediately placed in the delete log. When (if) the transaction is committed, every item in the delete log is retired, including version links for previous values of locations that were written to by the transaction.

We must ensure that no transaction attempts to read a version link that has already been freed. In particular, consider a transaction $T$ beginning at $t_i$ that scans down the version list for location $l$. It searches for the most recent version link with timestamp less than $t_i$. The worry is that this version link could be retired, which is indeed possible with split timestamps. In particular, another transaction $T'$ committing a data item read by $T$ could serialize at $t_i$, in which case $T$ continues scanning down the version list past that committed by $T'$, which may be garbage. We resolve this by incrementing the timestamp \textit{before} incrementing the epoch. Intuitively, this ensures memory safety by maintaining the invariant that the global clock is always at least the epoch so that transactions do not attempt to read too far into the past. Formally, we have the following theorem:

\begin{theorem}[Memory Safety] No transaction accesses a freed link.
\label{theorem:memsafe}
\end{theorem}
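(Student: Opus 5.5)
Suppose some transaction $T$, with start stamp $t_i$, dereferences a link $\ell$ (in a load, or by following a \texttt{next} pointer) after $\ell$ is freed. We aim for a contradiction. Fix the epoch bookkeeping first. $T$ announces epoch $e$ before reading $t_i$ and does not unannounce until its client code ends. So while $T$ is running, the global epoch is at most $e+1$. Since \texttt{uepoch} only frees memory retired in epochs at most $e_{\mathrm{cur}}-3$, anything freed during $T$ must have been retired in an epoch $e' \le e-2$. That retirement precedes, in real time, $T$'s announcement, and hence precedes $T$'s read of the start stamp $t_i$.

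The next step is to relate clock values to retirement. Links reach a retire list in two ways. The first is the detached links of a committing transaction $T'$. These are retired after its swaps and \texttt{add\_link}s and before its \texttt{next\_stamp}. The newly prepended link is tagged with a stamp equal to an older write's timestamp, namely \texttt{prev\_stamp}. The second is links freed by the user through the delete or allocation logs; these are ordinary data locations. The key invariant, suggested just before the theorem, is that the clock is incremented before the epoch advances. Concretely: if $T'$ with commit stamp $c$ retires in epoch $e'$, then every epoch advance past $e'$ happens after the clock exceeds $c$. This holds because $T'$ performs \texttt{next\_stamp}$(c)$, leaving the clock strictly above $c$, before it unannounces and so before the epoch can move past $e'+1$.

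The main argument follows. Since the link $\ell$ was retired in epoch $\le e-2$, the clock already exceeded $T'$'s commit stamp $c$ before $T$ read $t_i$, so $c < t_i$. Now $\ell$ is the prepended link, a version that $T'$ overwrote. A version list is sorted by non-increasing stamps, and the lock's status and the head links above $\ell$ carry stamps $\ge c$. The load's loop stops as soon as it reads a stamp below $t_i$. So I must show that $T$ reaches $\ell$ only by passing a link or status with stamp $\ge t_i$; in particular the predecessor stamp reached just before $\ell$ must be $\ge t_i$. The state just before $\ell$ was linked has stamp $c$, either the lock's status set on unlock or the stamp of the predecessor link. Because the load checks $\texttt{last\_stamp} \ge t_i$ before each dereference, and $c < t_i$, it would stop before dereferencing $\ell$. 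The same applies to the head read in the seqlock portion, since $\mathit{hd}$ is loaded between two equal status reads: a status $< t_i$ returns immediately without dereferencing.

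The hardest step will be pinning down exactly which stamp guards each dereference, especially the \texttt{Self} case, where the prepended link gets \texttt{commit\_stamp}, and the split-increment race in which $T'$ commits with stamp $c = t_i$. That race is exactly the scenario the paper warns about. I would handle it by showing that a $T'$ with $c = t_i$ cannot have finished its \texttt{next\_stamp} before $T$ read $t_i$ (otherwise the clock would exceed $t_i$). Hence its retirement lies in an epoch that $T$'s announcement still protects. Finally, user-freed data locations are only accessed through the epoch-protected window, so the standard EBR argument covers them.
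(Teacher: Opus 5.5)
Your argument is the paper's argument run in contrapositive. The paper shows that reaching $\ell$ forces $t_i \le c$, and then bounds the epoch during $T$ by $e+2$, where $e$ is the epoch when $c$ was read. You instead show that a freed $\ell$ was retired in an epoch $e' \le e-2$, where $e$ is the epoch $T$ announced. From that you get $c < t_i$, so the guard stamp $c$ stops the traversal before $\ell$. That structure is sound, and $e' \le e-2$ is exactly the slack needed.

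The gap is your justification of the key invariant. You say $T'$ performs \texttt{next\_stamp}$(c)$ ``before it unannounces,'' so that its own announcement holds the epoch back. In \ustm{} the epoch is unannounced right after the client code. The whole commit phase (reading $c$, swapping, unlocking, retiring $\ell$, and \texttt{next\_stamp}) runs after \texttt{uepoch::protect} has returned. So $T'$ can retire $\ell$ and then stall before touching the clock while the epoch advances freely. Your stated reason therefore establishes nothing, and your separate handling of the $c = t_i$ race leans on the same false premise. That handling is unnecessary anyway, since the main argument already gives $c < t_i$ for every freed link.

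The invariant does hold, but because of the before-epoch hook, which calls \texttt{ts.nextStamp()} ahead of every epoch increment. At most one epoch increment can already be in flight when $T'$ reads $c$. Every later increment runs its hook after $c$ was read, and so leaves the clock strictly above $c$. This is the paper's ``beyond at most one that may have been in-flight.'' As a consequence, your concrete form, ``every epoch advance past $e'$ happens after the clock exceeds $c$,'' is false as written. The advance $e' \to e'+1$ may be that in-flight one, completing while the clock still equals $c$. The advance to $e'+2$, however, is never the in-flight one, and that is all your $e' \le e-2$ step uses. With this substitution your proof goes through.

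Two smaller points:
\begin{itemize}
\item The step you flagged as hardest is routine. The stamp checked just before dereferencing $\ell$ is always the commit stamp of the transaction that prepended $\ell$. This holds whether $\ell$ is the head (status set at unlock), lies below a link added by a later committer (\texttt{prev\_stamp}), or lies below a \texttt{Self} link (\texttt{commit\_stamp}).
\item Frees through the allocation log also cover the unpublished version links of aborted transactions, not only user data. These are safe because they never enter a version list.
\end{itemize}
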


\begin{proof}[Proof (Sketch)]

  Consider a transaction $T$ with start stamp $t_i$ traversing a version list and any link $\ell$ this traversal reaches. Let $c$ be the commit stamp of the transaction $T'$ that committed $\ell$, and let $e$ be the value of the global epoch when $c$ was read from the global clock. Because $\ell$ is traversed, it must be the case that $t_i \leq c$, as the stamp associated with $\ell$ is at most $c$. Every increment of the epoch past $c$ (beyond at most one that may have been in-flight prior to the read of $c$ by $T'$) is preceded by an increment of the clock. $T$ read start stamp $t_i \leq c$ after announcing its epoch, so $T$ announces epoch at most $e + 1$. The epoch cannot advance beyond epoch $e + 2$ during the execution of $T$, and \texttt{uepoch} only frees pointers from epochs that are older than 2 less than the global epoch, so $T$ never traverses a freed link. The full proof is given in the full version~\cite{fullpaper}.

\end{proof}

\section{Correctness}\label{sec:correctness}

In this section we outline a proof of correctness of the approach. Various different correctness criteria exist for transactional systems, including (strict) serializability, and opacity. Serializability requires that all transactions appear to take place \textit{atomically} in some serialized order. Strict serializability furthermore requires that this serialized order preserves the real-time order of transactions---i.e. if $T_a$ commits before the invocation of $T_b$, then $T_a$ precedes $T_b$ in the serialized order.

Strict serializability is typically given as the strongest correctness criterion within the database community, but is too weak for STM systems. (Strict) serializability speaks only of \textit{committed} transactions, whereas the semantics of STM systems must also consider the behavior of aborted transactions. In particular it is desirable that even aborted transactions see only a ``consistent'' snapshot of shared state. Otherwise, programmers may make assumptions that do not hold inside of aborting transactions. \textit{Opacity} guarantees exactly this~\cite{opacity08}, requiring that even aborted transactions serialize at some point between invocation and abort. 

As stated by the following theorem, \ustm{} guarantees opacity.

\begin{theorem}[Opacity]
  Any history of \ustm{} transactions is opaque.
\end{theorem}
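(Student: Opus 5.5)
To prove the Opacity theorem, I will give an explicit serialization order and show that every transaction, whether committed, aborted, or read-only, sees exactly the state produced by its predecessors in that order. Each transaction $T$ gets a serialization point $(s(T), \rho(T))$, ordered lexicographically.

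The points are assigned as follows.
\begin{itemize}
\item A committed updating transaction gets $s(T) = c_T$, its commit stamp (the value of \texttt{global\_stamp} read after all write locks are acquired). Its tiebreak $\rho(T)$ is the real-time instant at which it releases its locks.
\item Every other transaction (read-only, aborted, or one whose reads were late) gets $s(T) = t_i - \tfrac12$, where $t_i$ is its start stamp. Such a transaction is placed just before all updates carrying stamp $t_i$, and after every update with stamp $< t_i$. Among these, I break ties by invocation time.
\end{itemize}

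The first step is a set of clock invariants, proved by induction over the execution.
\begin{enumerate}
\item \texttt{global\_stamp} is monotone.
\item Every lock's \texttt{Unlocked stamp} value and every version-link stamp equals the commit stamp of the writer that produced the following value.
\item Version lists are sorted by non-increasing stamp.
\item Split-increment invariant: once a transaction reads \texttt{global\_stamp} $=t$ at its start and observes a lock unlocked with stamp $<t$, no transaction can later install a write with stamp $<t$.
\end{enumerate}
Invariant 4 holds because \texttt{try\_lock} requires the previous stamp to be $< $ \texttt{start\_stamp}, and the committer reads its commit stamp after locking. Consequently any writer still in its critical region when the reader started will obtain a commit stamp $\ge t$.

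I also need a key lemma: if $T_1$ commits with stamp $c$ and $T_2$ starts after $T_1$ finishes, then $t_i(T_2) > c$. This follows because $T_1$ calls \texttt{next\_stamp}$(c)$ before returning, so the clock exceeds $c$ once $T_1$ completes. The lemma gives real-time order (the strict part of opacity) for committed-before-invoked pairs. Similarly, a transaction that starts at stamp $t$ and completes precedes, in real time, any writer that reads the clock afterwards, and that writer's stamp is $\ge t$. Pairs of committed writers with equal stamps need care: two writers can share a stamp only if their critical regions overlap. Either their write sets are hashed to disjoint locks, or one writer's lock-acquire observes the other's release. In the latter case, try-lock's check on a stamp $\ge$ \texttt{start\_stamp} forces the later one to be the one ordered after, so ordering by release time is consistent.

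Next comes read correctness, which is the main obstacle. I must show that each load by $T$ of location $l$ returns the value written by the last transaction before $s(T)$ that wrote $l$. There are two cases.
\begin{itemize}
\item \emph{Fast path.} The two status reads are equal and unlocked with stamp $<t_i$, so the seqlock argument shows the bytes copied were stable. By invariant 4, no write with a smaller stamp can appear later, so the value is the latest one with stamp $<t_i$.
\item \emph{Version-chasing path.} I argue that the swap places the old value in a link labelled with the stamp of the write that produced that old value. Walking down the list until the stamp is $<t_i$ therefore yields the deepest link for $l$ with stamp $\ge t_i$, which holds precisely the value current at time $t_i - \tfrac12$.
\end{itemize}
Memory safety of this traversal comes from Theorem~\ref{theorem:memsafe}. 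The subtle point is that a concurrent writer with stamp exactly $t_i$ may still be in progress. The head pointer \texttt{hd} is read between the two status reads, so it is consistent with the observed stamp. Any link added later has stamp $\ge$ the observed stamp, so it cannot change the answer. Heterogeneous lists, where several locations share a lock, are handled by the \texttt{loc} match.

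Finally, I verify the committed updaters. Such a transaction saw no late read, so each of its reads returned the latest value with stamp $<t_i$. Read validation after taking the commit stamp confirms that no location it read was overwritten by a transaction with stamp in $[t_i, c_T)$; any such overwrite would leave the lock stamp $\ge t_i$ or held by another thread. Hence its reads are still current at point $(c_T, \rho)$. Its writes are installed atomically under the held locks, and own-writes are read from the write log. Aborted and read-only transactions install nothing and, by the previous step, read a consistent snapshot at $t_i - \tfrac12$. That point lies within their execution interval, because the clock was $t_i$ at their start. Together these give a legal sequential history respecting real time, which establishes opacity.
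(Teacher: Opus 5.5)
\textbf{Gap: the equal-stamp tiebreak.} Most of your framework matches the paper's, and you do give the extra details the sketch omits: updaters ordered by commit stamp, all other transactions placed at $t_i-\tfrac12$, clock monotonicity, and the fact that a writer already holds all its locks when it reads its commit stamp. The step that fails is the tiebreak $\rho(T)=$ lock-release time among committed updaters that share a commit stamp. With it, the claim that a committed updater's reads ``are still current at point $(c_T,\rho)$'' also fails.

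For equal-stamp writers you only consider write--write overlap. You miss the read--write anti-dependencies that the paper explicitly flags. A concrete counterexample:
\begin{itemize}
\item $T_1$ reads $x$ and writes $y$; $T_2$ writes $x$; both have start stamp $c$, and the clock stays at $c$ until $T_2$'s \texttt{next\_stamp}.
\item $T_1$ locks $y$, reads commit stamp $c$, and validates $x$ (unlocked, stamp $<c$).
\item $T_2$ then locks $x$, reads commit stamp $c$, swaps in its value, releases $x$ with stamp $c$, and bumps the clock.
\item Only then does $T_1$ swap $y$ and release.
\end{itemize}
Both commit with stamp $c$, and $T_2$ releases first, so your order puts $T_2$ before $T_1$. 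But $T_1$ returned the value of $x$ that $T_2$ overwrote, so the sequential history is illegal. Validation only excludes writers that hold, or have already released, the lock at the moment $T_1$ checks it. It does not exclude an equal-stamp writer that locks after the check but releases before $T_1$ does.

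\textbf{Repair.} Break ties by the instant each updater reads its commit stamp; any point after all its locks are acquired and before its first validation also works. Among committed updaters with stamp $c$:
\begin{itemize}
\item None reads another's writes, since each sees only versions with stamp $<t_i\le c$.
\item None shares a lock with another, since \texttt{try\_lock} would then see stamp $c\ge$ its start stamp.
\end{itemize}
So the only ordering constraints among them are anti-dependencies $T_1\to T_2$, where $T_1$ read a lock that $T_2$ writes. For such an edge, $T_1$'s validation of that lock must precede $T_2$'s acquisition of it: while $T_2$ holds the lock, validation fails, and after $T_2$ releases it the lock carries stamp $c\ge t_1$. This gives the chain
\[ \text{$T_1$ reads its commit stamp} < \text{$T_1$ validates the lock} < \text{$T_2$ acquires the lock} < \text{$T_2$ reads its commit stamp}. \]
Hence ordering by commit-stamp-read time is a linear extension of the anti-dependency relation. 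This is precisely the acyclicity the paper invokes when it says such transactions ``may be serialized topologically.''

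\textbf{Minor point.} Your side claim that equal-stamp writers must have overlapping critical regions is false: a writer can release its locks and, before its \texttt{next\_stamp}, another writer can lock and read the same stamp. You only need the disjoint-locks fact above, not that claim.
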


\begin{proof}[Proof (Sketch)]

  It suffices to show that all transactions serialize at some point between their invocation and response, including aborted transactions. Read-only transactions serialize when they read the start stamp. Aborted (update) transactions also serialize when they read the start stamp.

  A committed update transaction with commit stamp $t_f$ serializes when the global clock is incremented from $t_f$ to $t_f + 1$, as this is when the writes become globally visible to readers. Note that this may be \textit{after} the locks are released by the updating transaction, as the update transaction invokes \texttt{next\_stamp} after the locks are released. Regardless, this increment must occur before the transaction returns.

  Intuitively, these serialization points are consistent because a read by a transaction with start stamp $t_i$ will only observe the writes performed by an update transaction with commit stamp $t_f < t_i$. Hence when the global clock advances beyond $t_f$, all writes installed by that update become globally visible. Note that multiple transactions may share the same commit stamp, and thus one clock increment may serialize multiple updaters. The write sets of such transactions must be disjoint, as otherwise lock acquisition would fail. There can still exist anti-dependencies between transactions sharing the same timestamp. Consider transactions $T_1$ and $T_2$ that read and write location $x$, respectively. If $T_1$ validates the lock protecting $x$ before $T_2$ acquires it, then this orders $T_1$ before $T_2$ yet they share the same commit stamp. However, these (anti) dependencies are acyclic, and such transactions may be serialized topologically. The full proof is given in the full version~\cite{fullpaper}.

\end{proof}

\section{Experiments}\label{sec:experiments}

We evaluate \ustm{} on a variety of different workloads, comparing its performance to other state of the art STMs, demonstrating that \ustm{} matches or exceeds performance of these systems without sacrificing simplicity or generality.

\myparagraph{Setup} All experiments are run on a 96-core Amazon Web Services c7i-metal instance with 2x Intel(R) Xeon(R) Platinum 8488C (48 cores and 3.2 GHz), and 384 GB memory. Each core is 2-way hyperthreaded, giving 192 hyperthreads. 
%All benchmarks were run using \texttt{numactl -i all}, spreading memory pages across sockets in a round-robin fashion.
The machine runs with Ubuntu 22.04.1 LTS, and the code was compiled using g++11 with -O3.

\myparagraph{Systems Tested} We benchmark against \fuse{}~\cite{tlf25}, Multiverse~\cite{Coccimiglio0R26}, and \twoplsf{}~\cite{2plsf}. The first two are state of the art optimistic multiversioned TMs, whereas \twoplsf{} is a state of the art pessimistic single-versioned TM, allowing for a robust comparison across different design spaces.
We do not present experiments for DCTL~\cite{RC24}, as its implementation is proprietary, or TinySTM~\cite{tiny08}, because it could not execute without crashing in most experiments.

% Under high contention, the throughput of Multiverse is limited by the heartbeat of the central clock. Although they employ the same lazy timestamp algorithm of~\cite{RC24}, they use a \cas{} to increment the global clock if it is equal to the start timestamp on transaction abort. Even if the \cas{} fails, it still takes the cache line in exclusive mode. Because the clock is monotonically increasing, it is correct and more efficient to first load the value of the global clock and only attempt the \cas{} if the value loaded is indeed equal to the start timestamp. We implemented this change, resulting in significant improvements to Multiverse's performance under moderate to high contention. The authors also integrated this change when we notified them. However, this modification results in a segmentation fault on the c7i-metal instance on which we ran our final benchmark suite, and we were forced to exclude it for the final timings.

\myparagraph{Workloads} Our workloads are based on those from YCSB benchmark suite~\cite{YCSB}, which is commonly used to benchmark key-value stores\footnote{Our workloads are not \emph{literally} taken from YCSB, since we do not classify different parameter regimes into workloads A/B/C/D/E as in YCSB}. Our benchmarks consist of measuring the throughput of transactions consisting of inserts, finds, and deletes to random keys within a key-value store. We implement this store using various data structures and vary different YCSB parameters, measuring the throughput over this mix. In particular, we implement the key-value store (in different experiments) using a Linked List, Skip List, B-tree, Adaptive Radix Tree (ART)~\cite{art13}, Treap, AVL tree, Leaf Tree, and Hash Table. The leaf tree is a simple binary tree where data is only stored at the leaves. All of these data structures are simply sequential implementations. 

% It is likely possible to achieve higher throughput using state-of-the-art concurrent data structures, but we wish to demonstrate that \ustm{} 

For each backing data structure, we vary (a) data structure size (denoted by $n$), (b) update percentage, (c) number of operations per transaction, (d) number of threads, and (e) zipfian parameter. Every data structure is initially prefilled to size $n$ with keys selected uniformly at random from a universe $U$ of $2n$ 64-bit keys total. Every key is associated with a corresponding 64-bit value.

In the timed portion of the code, each thread executes transactions consisting of inserts and deletes (in equal numbers) and finds. Keys for these operations are also sampled from $U$, but according to a zipfian distribution specified by $z$. $z$ ranges from 0 (uniform) to 0.99 (highly skewed).   This models common access patterns to databases, in which most accesses are concentrated around ``hot'' keys. We measure the throughput in operations per second over different mixes of parameters and backing data structures.

When unspecified, we fix every parameter at its \textit{default}. The default size for list data structures is 300, whereas for all others it is $10^6$. These remaining default values are $u = 5\%$ updates, $t = 4$ operations per transaction, $p = 192$ threads, and zipfian $z = 0.75$.

\subsection{Geometric Mean Performance}\label{sec:experiments:geomean}

Our first set of experiments aim to compare the average performance of the STM systems under test over a wide variety of different workloads. To this end, we measure the throughput of every system for every combination of the following parameters:

\begin{itemize}
  \item Update Percentage $\in \{ 0\%, 5\%, 50\%\}$
  \item $n \in \{ 10^2, 10^3 \}$ for linked list, $n \in \{ 10^5, 10^6, 10^7 \}$ for all other structures
  \item Transaction Size $\in \{ 1, 4, 16 \}$
  \item Zipfian $\in \{ 0, 0.75, 0.99 \}$.
\end{itemize}

For every STM system and backing data structure, we then calculate the geometric mean of the throughputs across this mix of parameters. The results are shown in Figure~\ref{fig:geomean}, where the geometric means are grouped by data structure and normalized to the max per structure. As we can see, \ustm{} achieves the highest throughput across all data structures over this parameter mix.

\subsection{Varying Parameters}

Our next set of experiments fixes all but one YCSB parameter, which is varied across a range of different values. We then measure how the performance of each system changes as this parameter changes, allowing us to compare the relative performance of different STMs on different workloads. We evaluate one data structure with high fanout (the B-tree), another with low fanout (the AVL tree), and the hashtable; scaling within each class is similar.

\begin{figure}[t]
\centering
\includegraphics[width=\columnwidth]{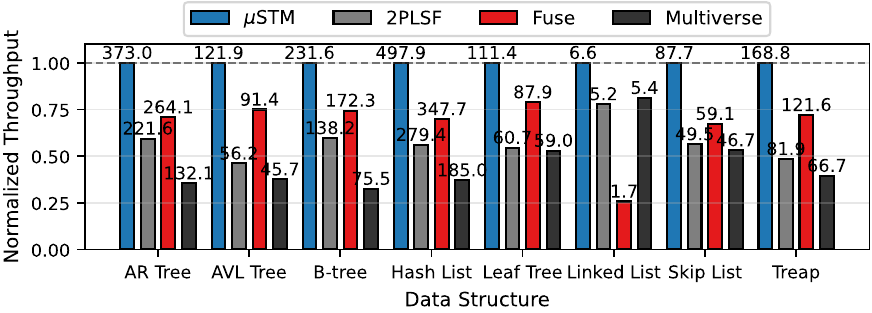}
\caption{Geometric mean of throughput across 3 sizes, 3
zipfian parameters, 3 update\%, and 3 transaction
sizes. Normalized to the highest throughput per structure.}
\label{fig:geomean}
\end{figure}

\begin{figure}[t]
\begin{subfigure}[t]{0.48\textwidth}%
\centering
\includegraphics[width=\columnwidth]{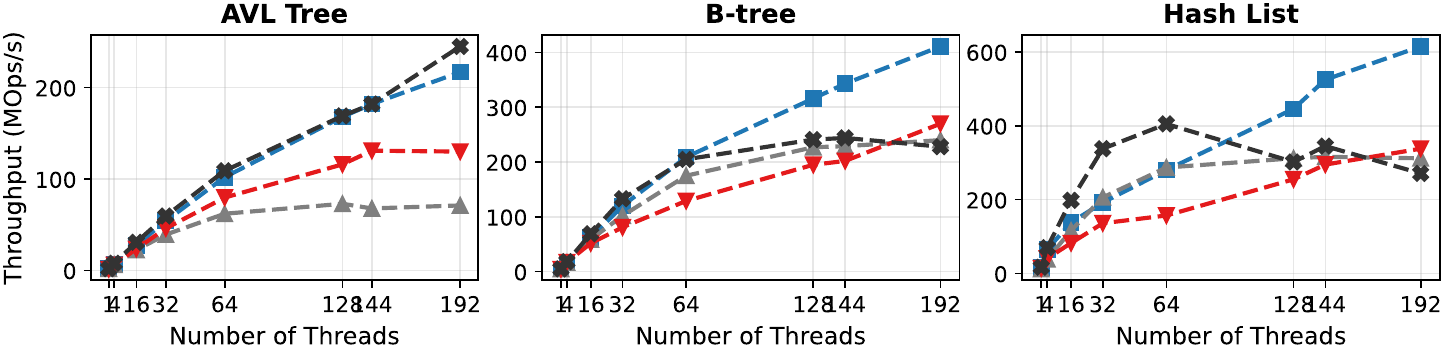}
\caption{Varying Thread Count}
\label{fig:threads}
\end{subfigure}
\hfill
\begin{subfigure}[t]{0.48\textwidth}
\centering
\includegraphics[width=\columnwidth]{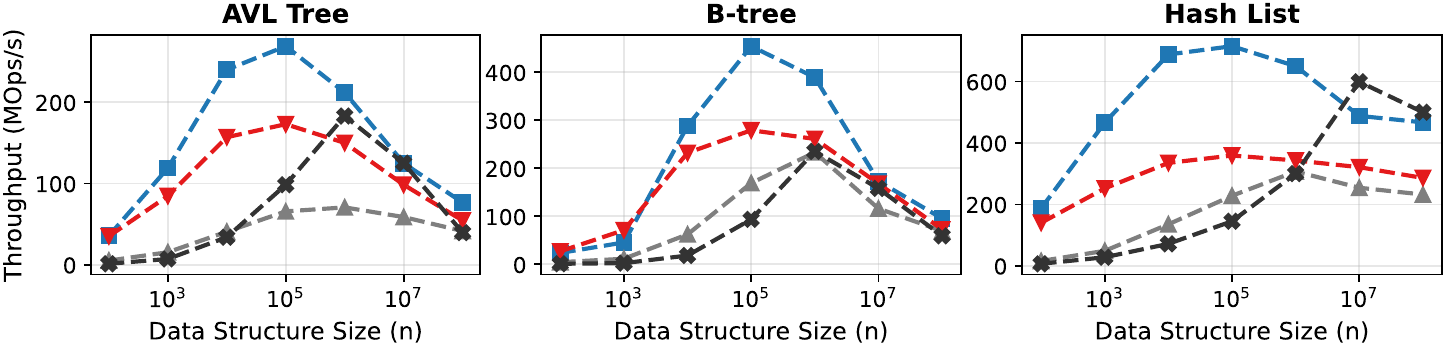}
\caption{Varying data structure size}
\label{fig:size}
\end{subfigure}
\hfill
\begin{subfigure}[t]{0.48\textwidth}
\centering
\includegraphics[width=\columnwidth]{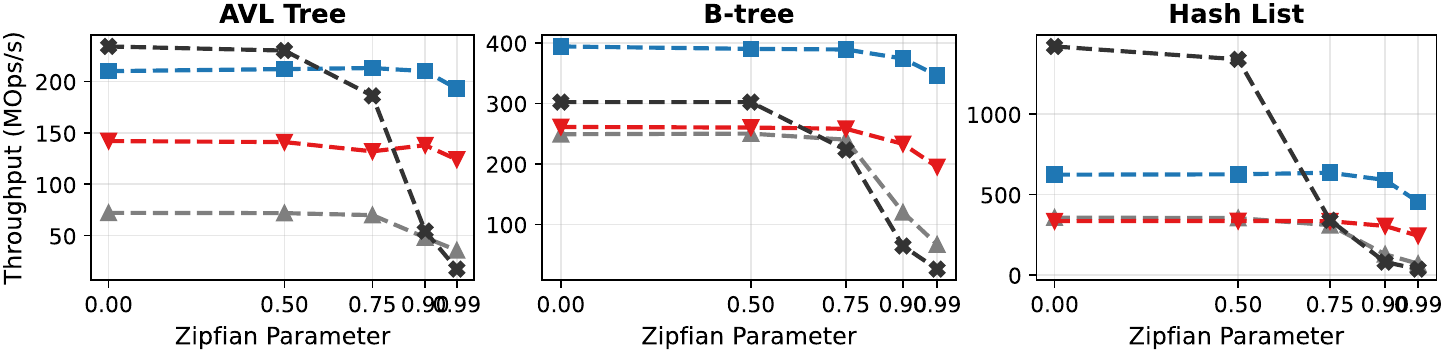}
\caption{Varying skew}
\label{fig:zipf}
\end{subfigure}
\hfill
\begin{subfigure}[t]{0.48\textwidth}
\centering
\includegraphics[width=\columnwidth]{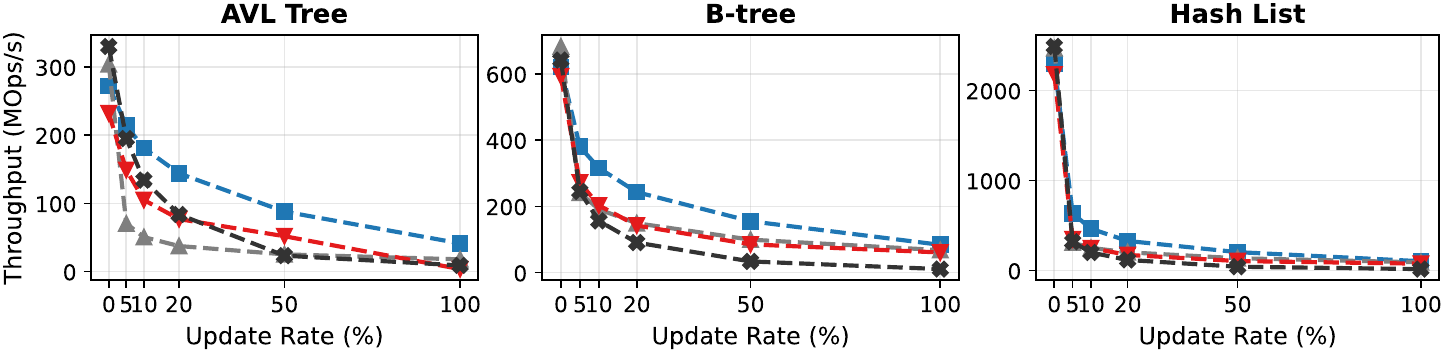}
\caption{Varying update rate}
\label{fig:update}
\end{subfigure}
\hfill
\begin{subfigure}[t]{0.48\textwidth}
\centering
\includegraphics[width=\columnwidth]{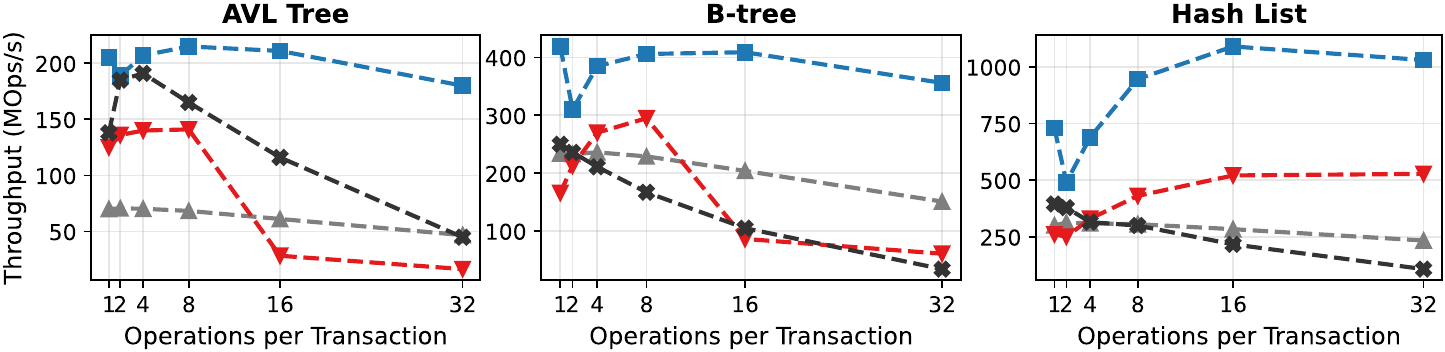}
\caption{Varying transaction size}
\label{fig:txn-size}
\end{subfigure}
\hfill
\caption{
Comparison of throughput between \twoplsf{}, \fuse{}, Multiverse, and \ustm{} (higher is better).
}
\end{figure}%
\myparagraph{Thread Count} Figure~\ref{fig:threads} displays the scalability of the different STM systems with respect to thread count. We see that all STM systems scale well with thread count except \twoplsf{}, which levels off around 128 threads. This is because their implementation of reader-writer locks is not scalable. In particular, for $p$ threads, acquiring a write lock requires scanning $p$ read indicators to ensure that no thread has taken a read lock.

% \begin{figure}[ht]
%     \centering
%     \includegraphics[width=\columnwidth]{plots/paper/threads.pdf}
%     % \vspace{-.15in}
%     \caption{Varying Thread Count}
%     \label{fig:threads}
% \end{figure}

Multiverse also does not scale well beyond 128 threads on data structures with high fanout like the B-tree. Because these trees have wide fanout and are thus shallow, most writes are concentrated on a few select nodes along the root to leaf path. Furthermore, Multiverse employs \textit{eager} locking for writes, acquiring write locks during the traversal phase. Readers will abort if they encounter a lock acquired by a writer \textit{even} if that writer will later abort. Hence writers can starve readers even if they later abort, and this is more likely to occur in data structures with high fanout.

% \begin{figure}[ht]
%     \centering
%     \includegraphics[width=\columnwidth]{plots/paper/size.pdf}
%     % \vspace{-.15in}
%     \caption{Varying data structure size}
%     \label{fig:size}
% \end{figure}

\myparagraph{Data structure size} We examine how the throughput of different STM systems changes with data structure size in Figure~\ref{fig:size}. We see that \ustm{} scales well up to $\sim 10^6$. At this point, the data structure likely no longer fits in L2 cache, and beyond $10^7$ the working set can no longer fit in the L3 cache. \ustm{} is designed so that the lock table fits in L3 cache, so this is expected beyond this point. The program becomes memory bound, and the throughput of all STMs degrades as expected.

\myparagraph{Zipfian} Figure~\ref{fig:zipf} compares the throughput of different systems while increasing the skew of the key distribution. When keys are uniformly distributed, all STM systems achieve high throughput, with Multiverse outperforming all other systems on some data structures, like in the Hash Table.

% \begin{figure}[ht]
%     \centering
%     \includegraphics[width=\columnwidth]{plots/paper/zipfian.pdf}
%     % \vspace{-.15in}
%     \caption{Varying skew}
%     \label{fig:zipf}
% \end{figure}

The throughput of \ustm{} and \fuse{} is mostly stable as zipfian increases, whereas the throughput of Multiverse and \twoplsf{} falls significantly even at the relatively low default update rate (5\%). Again, this is especially pronounced for data structures with high fanout where the average traversal is short. We believe that this is due to \twoplsf{} and Multiverse's eager acquisition of locks.

Additionally, we see that Multiverse performs extremely well for data structures with a short traversal at low contention---this is especially noticeable for the Hash Table, where Multiverse achieves twice the throughput of \ustm{} at low zipfian. This is one regime in which lazy timestamping performs much better than split timestamping. In \ustm{}, an update transaction must increment the global clock if it has not changed between when the commit stamp is taken and when the locks are released. This is more likely for data structures with a short traversal, as transactions over these data structures will have a small read log to validate and write log to apply. In contrast, in the lazy timestamp algorithm employed by Multiverse, the global clock is only incremented on abort, which at low zipfians is highly infrequent. Furthermore, in hash tables there is a lower likelihood of two transactions conflicting due to hash buckets being independent components. Thus in this regime the heartbeat of the global clock is a bottleneck for \ustm{} but not Multiverse.

\myparagraph{Update Rate} Figure~\ref{fig:update} compares the throughput of different systems for increasing update rates. For read-only transactions, Multiverse often achieves the highest throughput of any STM. Because locations are only versioned when contended, Multiverse operates in single-versioned mode for the read-only workload, achieving high throughput. \twoplsf{} also performs well for read-only workloads; the implementation of scalable read indicators distributes the read indicators for different threads across uncontended cache lines, minimizing overhead for read-only transactions.

However, the throughput of \twoplsf{} and Multiverse quickly declines as update rate increases. This is because, as discussed, both \twoplsf{} and Multiverse suffer at high contention. Furthermore, \twoplsf{}'s implementation of scalable reader-writer locks penalizes writers by forcing every writer to scan $p$ read indicators to acquire a write lock. Again, the decline in Multiverse's throughput is not uniform across data structures, and is more pronounced for data structures with higher fanout like the B-tree.

\myparagraph{Transaction Size} In Figure~\ref{fig:txn-size}, we compare the throughput of different implementations against varying transaction size. We see that the performance of \twoplsf{} and \ustm{} decays only modestly with an increasing number of operations per transaction, whereas that of \fuse{} and Multiverse quickly drops off. Again this is because Multiverse suffers under contention, and a larger number of operations per transaction increases conflicts.

Additionally, we see that \ustm{} performs relatively poorly at a small number of operations per transaction for the chaining hash table, but improves markedly with a larger number of operations per transaction. This is because lazy-timestamping generally performs better than split-timestamping under low contention with short-running transactions. Under such workloads---especially those with few operations per transaction---the commit phase of every transaction is very short. For \ustm{}, this means that it is less likely that the global clock was incremented between the point when the commit stamp of a transaction is read and when it is later possibly incremented after the locks are released, and more transactions have to increment the global clock.

% \begin{figure}[ht]
%     \centering
%     \includegraphics[width=\columnwidth]{plots/paper/ops_per_txn.pdf}
%     % \vspace{-.15in}
%     \caption{Varying transaction size}
%     \label{fig:txn-size}
% \end{figure}

\subsection{Range Queries}

We compare the throughput of different STMs for \textit{range queries}. For this experiment, 50\% of threads (the writers) execute transactions consisting of four update operations at keys uniformly sampled from $U$ ($z = 0$). The remaining 50\% of threads execute \textit{range queries}, which are read-only transactions that uniformly sample a start key from $U$ and perform contiguous read operations starting from that key for a specified \textit{range size}. A range query samples a start key $k$ and reads keys [$k, k$ + $ range\_size$]. %+ does not appear inside math mode for some reason

\begin{figure}[ht]
    \centering
    \includegraphics[width=\columnwidth]{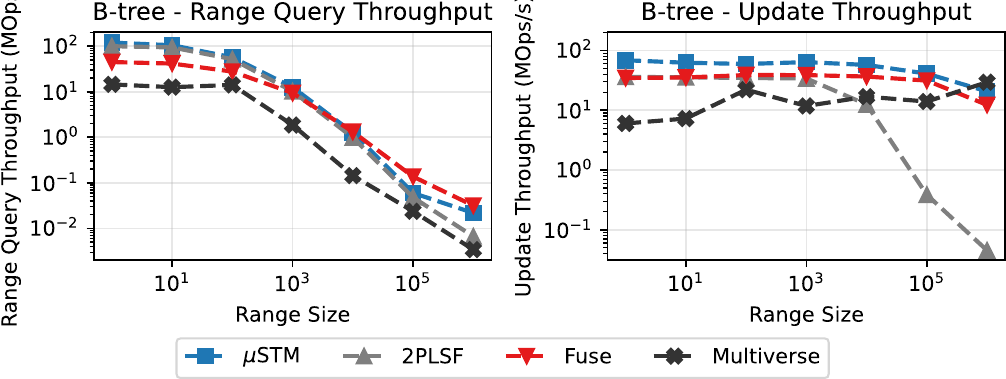}
    % \vspace{-.15in}
    \caption{Range Query Throughput. Left is the throughput of the range queries themselves, right is the throughput of updaters across increasing range sizes}
    \label{fig:rq}
\end{figure}

In Figure~\ref{fig:rq}, we examine how the throughput of both the range queries and update transactions change with increasing range size on the B-tree---the data structure supporting most efficient range scans. \fuse{} and \ustm{} generally achieve the highest range query throughput across all sizes. We see that the range query throughput scales inversely with range query size, as expected. However, the range query throughput of \twoplsf{} does not scale significantly worse than the other multiversioned STMs, which may be surprising. In contrast, the update throughput of \twoplsf{} falls greatly with higher range size, whereas that of other STMs is stable. This is because of how \twoplsf{} arbitrates conflicts between different transactions to ensure starvation-freedom. Transactions with earlier start timestamps are given priority, and can abort those with later timestamps. Long-running range queries will generally have lower timestamps than newer update transactions, giving the range queries priority and aborting the short-lived writers.

\subsection{Ablation Studies}\label{sec:ablation}

We now perform two ablation studies to determine how removing different features of \fuse{} and \ustm{} affects throughput. The results are shown in Figure~\ref{fig:ablation}.

\begin{figure}[ht]
    \centering
    \includegraphics[width=\columnwidth]{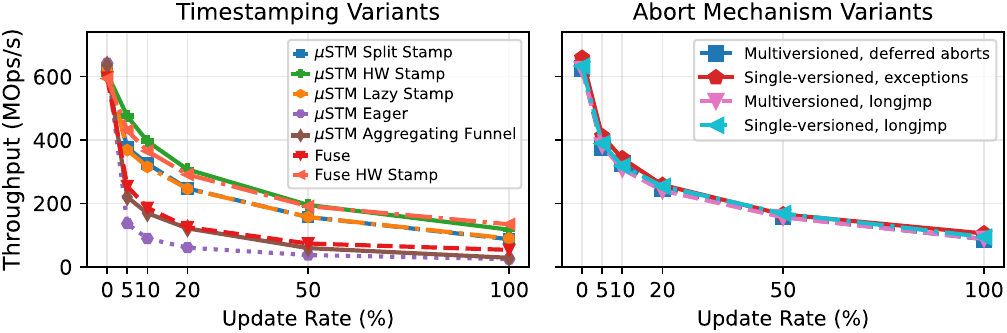}
    % \vspace{-.15in}
    \caption{Ablation studies}
    \label{fig:ablation}
\end{figure}

\myparagraph{Timestamp Algorithms} In one ablation study we compare the performance of variants of \fuse{} and \ustm{} that employ different timestamp algorithms. We measure the throughput of these variants on the same YCSB-like benchmark for the B-tree across increasing update rate. The two variants \texttt{\fuse{}-HWStamp} and \texttt{\ustm{}-HWStamp} use a hardware counter (based on the x86 \texttt{rdtsc} instruction). The default variant of \fuse{} uses an eager timestamp mechanism, incrementing the clock on every transaction. We also include a variant of \ustm{} that employs the lazy timestamp algorithm introduced by~\cite{RC24}, and two \textit{eager} variants that increment the clock on every transaction. One simply uses a hardware fetch and add, while the other employs a more complex software implementation of fetch-and-add, aggregating funnels~\cite{aggregatingfunnels}. Aggregating funnels use software combining to batch different fetch-and-add operations.

We see that both hardware timestamp algorithms are the fastest across all update rates, as expected. The split and lazy timestamp algorithms achieve basically equal performance, and are only $\sim$20\% slower than the hardware variants. Moreover, they scale well with update rate---the drop in throughput is not greater than the hardware timestamp variants. Hence the heartbeat of the central software clock is not a bottleneck, even at high contention.

Finally, we see that all eager variants, including the default timestamp variant of \fuse{} and both eager variants of \ustm{}, achieve relatively poor performance with increasing update rate. The vanilla eager variant of \ustm{} achieves especially poor performance, with performance dropping dramatically even at 5\% updates. The variant based on aggregating funnels achieves throughput that is generally twice as high across increasing update rate, but is still comparably low. Thus, the bottleneck in STMs that use eager timestamp algorithms is the heartbeat of the central clock.

\myparagraph{Abort variants} Our second ablation study compares the throughput of variants of \ustm{} employing different abort strategies across increasing update rate. We include the default multiversioned abort-free variant, a multiversioned variant with early aborts implemented using \texttt{longjmp}, and two single-versioned variants which by necessity must abort early. One variant also uses \texttt{longjmp} to implement aborts, while the other uses exceptions. Overall, the throughput of all versions is comparable. The single-versioned implementations achieve slightly higher throughput for read-only workloads, but fall off with higher update rate. For all update rates, we see that the overhead of multiversioning is relatively low. Moreover, early aborts do not achieve higher throughput by avoiding wasted work.

\subsection{Comparison to Fine-Grained Concurrency}

In this section we demonstrate that \ustm{} does not introduce significant overhead relative to fine-grained concurrent data structures---in particular, a B-tree, AVL tree, and hashtable implemented using \emph{optimistic locking} (OL)~\cite{KungL80}. OL is a technique that allows most of the traversal in these search structures to proceed without locks.

Our baseline measures the throughput of singular OL operations on the same YCSB workload \emph{not} wrapped in any transaction. We compare this to the throughput of \emph{transactions} over the corresponding STM data structures comprising $t$ operations.

% Note here we are running transactions over the OL data structures, whereas in the previous experiment the transactions were over STM data structures. This experiment would not have been possible to implement using sequential data structures, as it is not safe to run sequential code concurrently outside of a transaction.
% running $t$ operations over OL data structures in non-serializable groups; each operation within a group just runs one after another with no transactional bookkeeping or guarantees.

For this experiment, we set $n = 10^6$ and $z = 0$ to measure transactional overhead and reduce the confounding effects introduced by contention-induced aborts; the results over increasing $t$ are shown in Figure~\ref{fig:txn-overhead}. We see that the throughput of the baseline is relatively stable across increasing $t$, whereas for the various STM systems it increases up to $t = 8$ and then stabilizes; at this point, the cost of the operations themselves dominates the startup cost. Overall, the disparity between \ustm{} and the baseline is relatively low, stabilizing at about 20\% for the B-tree and the Hashtable.

\begin{figure}[ht]
    \centering
    \includegraphics[width=\columnwidth]{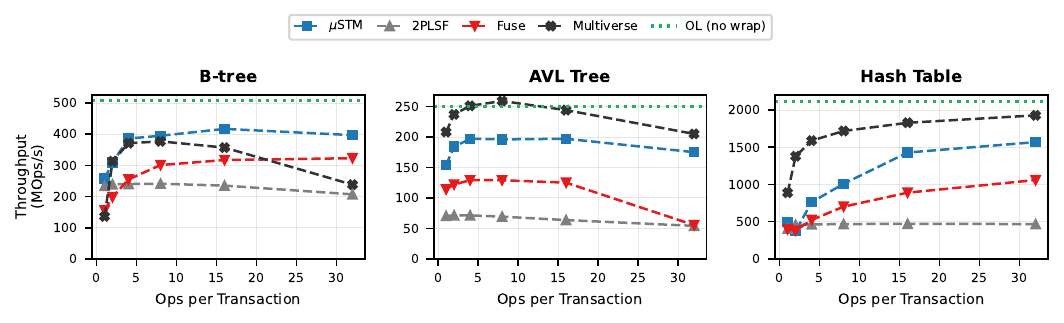}
    % \vspace{-.15in}
    \caption{Evaluation of the transactional overhead of different STM systems over different numbers of operations per transaction on different Optimistic Locking (OL) data structures. The green line represents the throughput of OL operations not wrapped in transactions.}
    \label{fig:txn-overhead}
\end{figure}

\subsection{Non-trivial Types}

We now evaluate the performance of \ustm{} over the same YCSB-like benchmark when used with non-trivial types. The key-value store for this experiment is backed by a probing hashtable that stores buckets inline. We compare the throughput of this hashtable with three types of keys and values: integers, \emph{short strings}, and \emph{long strings}. The strings are implemented using a \texttt{parlay::sequence} that supports the same API as \texttt{std::vector}. The container uses a short string optimization (SSO) that stores the string itself inline with the container, whereas the long string must be stored through a level of indirection. In both cases the size of the \texttt{parlay::sequence} is 16 bytes (but with the long string stored via an additional indirect slot). The integer key is 8 bytes. For each class of value, the value itself is fixed at an arbitrary value: a 1-byte string for short string and 20-byte string for long string. For short strings, the key space is the set of string representations of every integer in the range $[0, 2n)$, whereas for long strings it is the string representation of the hash of every integer in this range. For $n = 10^6$, every short string key fits inline, whereas it is highly likely that every long string value does not.

\begin{figure}[t]
    \begin{minipage}[t]{0.48\columnwidth}
        \centering
        \includegraphics[width=\linewidth]{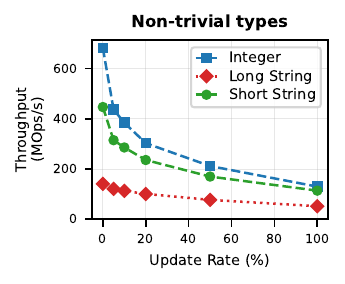}
        \caption{Throughput of a probing hash table on a YCSB-like benchmark with different key/value sizes.}
        \label{fig:strings}
    \end{minipage}
    \hfill
    \begin{minipage}[t]{0.48\columnwidth}
        \centering
        \includegraphics[width=\linewidth]{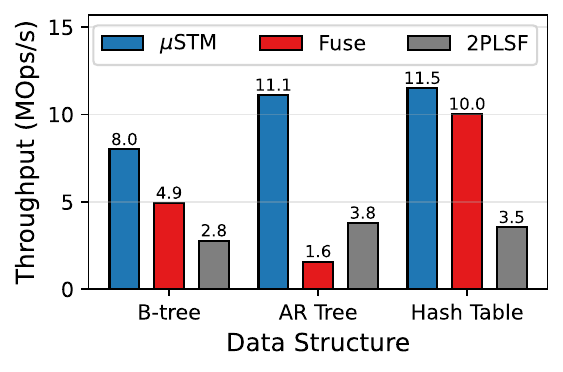}
        \caption{TPC-C Benchmark}
        \label{fig:tpcc}
    \end{minipage}
\end{figure}

Results are shown in Figure~\ref{fig:strings} across increasing update rate where the other YCSB-like parameters are fixed at their defaults. We see, as expected, that the variant employing integer keys achieves the highest throughput, with the variant employing short strings achieving about 20\% lower throughput. This disparity is due to several factors. Each bucket consists of one 8-byte integer key plus the size of the value; thus each bucket employing integer values requires 16 bytes, whereas that employing short strings requires 32. Hence the data structure is twice as large, and more of it resides in L3 vs.\ L2 cache. Additionally, the cost of transactional reads to larger data types is higher, especially under contention; the sequence lock mechanism of reads requires reading the value until it is stable.

The disparity in throughput between short and long strings is much greater; every insert of a long string entry requires two allocations (one for the key, and one for the value), and hashing/probing this key requires a level of indirection to access the string itself.

\subsection{TPC-C Benchmarks}

%\begin{figure}[ht]

We evaluate the three highest-performing transactional data structures on a TPC-C-like benchmark suite~\cite{tpcc}. TPC-C is commonly used to benchmark Online Transactional Processing (OLTP) systems. TPC-C tables maintain data for different \textit{warehouses}, like the stock of items available there, and transactions mutate these data. For our experiments, the number of warehouses is fixed at 192 (the number of hardware threads). The results of our TPC-C benchmark are shown in Figure~\ref{fig:tpcc}, which measures the throughput of databases backed by an ART, B-tree, and hash table. \ustm{} achieves the highest throughput across all data structures---this is particularly notable for the ART.

\subsection{Comparing Architectures}

Finally, we benchmark each system on two additional architectures.  These
results include the Intel machine used in previous experiments, an 80-core ARM
Neoverse N1 (up to 3GHz, 1 NUMA socket, no hyperthreading, and no L3 cache) and
a 96-core AMD EPYC 9R14 (up to 3.3GHz, 1 NUMA socket, 256 MB L3 Cache, 2-way
hyperthreading).  These machines present a diverse architectural spectrum as
the AMD and Intel machines implement the same ISA but with different
microarchitectural choices, and the ARM machine has a different ISA and design
philosophy (e.g., relaxed memory ordering and no L3 cache).

Results for each machine on AVL Tree, B-Tree, and Hash Table structures can be
seen in Figure \ref{fig:machines}.  On both Intel and AMD \ustm{} completely
outmatches the other STMs, performing $33\%\!-\!47.5\%$ better when compared to
the next best system in each benchmark. Surprisingly, while on Intel there is a
clear hierarchy, on AMD the story changes and \fuse{}, \twoplsf{} and
\multiverse{} have relatively equivalent performance.  Lastly, on ARM \fuse{}
becomes a close competitor on all data structures and \multiverse{} performs
just as well as \ustm{} on the hash table.
We note that \ustm{} had significant performance degradation for read-heavy
workloads on ARM, due to designing the lock table to fit in L3 cache (which the
ARM machine does not possess). Indeed in our experiments, making the lock table
smaller improved \ustm{} performance in these cases. Still, we decided not to
show results with this modification in order to refrain from hyper-optimizing
to a certain hardware.

\begin{figure}[t]
\centering
\includegraphics[width=\columnwidth]{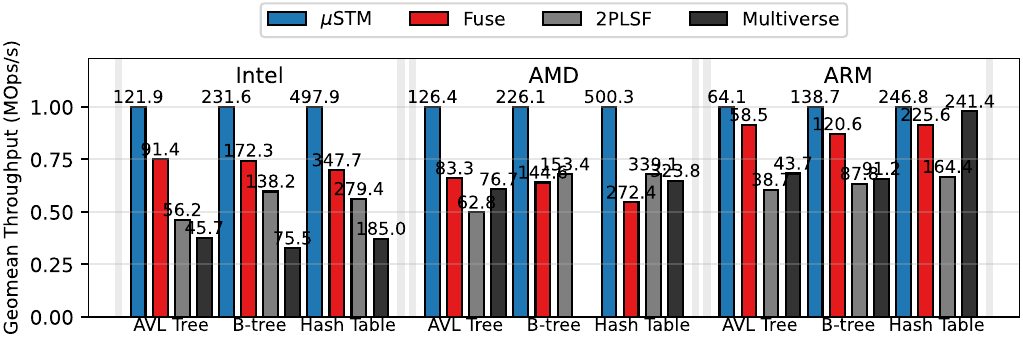}
%\hfill
\caption{Comparing AVL Tree, B-Tree and Hash Table on Intel, AMD and ARM.}
\label{fig:machines}
\end{figure}

\section{Discussion}\label{sec:discussion}

We discuss some of the implementation details and limitations that guided our
design of~\ustm{}.

\subsection{Hardware Timestamps}\label{sec:hwstamps}

Timestamping is most commonly implemented as a single shared counter that is
incremented atomically. Yet, as we show in Section~\ref{sec:ablation}, a more
performant variant involves utilizing hardware cycle counters to establish a
happens-before relation.  However, a cycle counter is required to fulfill two
key properties to qualify~\cite{Ruan13}: 1) processors see their own clock as
strictly monotonic (locally monotonic) and 2) if two instructions executed
concurrently are ordered, then their clock values must reflect the same
ordering (globally monotonic).

The problem lies in ensuring that the hardware actually provides these
properties. Ruan et al.~\cite{Ruan13} cite private conversations with an Intel
engineer regarding the guarantees provided by the \texttt{rdtscp} instruction.
Although on Intel we have indeed experimentally observed these properties, the
same cannot be said for AMD machines, even though they implement the same ISA.
Similarly, Kashyap et al.~\cite{ordo} cite private conversations when mentioning
that clocks in Intel machines have constant skews, an assumption they heavily
rely on. On ARM machines, we were able to successfully use hardware cycle
counters (based on Linux Kernel's implementation~\cite{linux_arch_timer}) and
observed similar results to the ones in Section~\ref{sec:ablation}.  Since
these properties remain undocumented, we view hardware stamping as a
non-portable alternative to lazy or split timestamping that should be used when
available and appropriate.

\subsection{Sequence Locking}
To load and store non-atomic user data bytewise atomically we use an idiom often used with sequence locks~\cite{Hemminger02,Lameter05,Boehm12,Sullivan17}.   Until C++20 there was no effective way to implement this so that it had fully defined behavior~\cite{Boehm12,Boehm20}.   Since C++20 \lstinline{std::atomic_ref} can be used, although for efficiency this makes for complicated code with many special cases.   We hope that C++ adopts the proposal for bytewise atomic loads and stores like that suggested by Hans Boehm~\cite{Boehm20}.

\section{Conclusion}\label{sec:conclusion}

In this paper, we have presented \ustm{}, a simple STM system that achieves both usability and generality while maintaining state-of-the-art performance. To avoid memory leaks and other errors arising from adapting existing sequential code, we introduced and implemented \textit{deferred} aborts, ensuring that user code is never aborted during the speculative phase of a transaction. To reduce the heartbeat of the central clock while maintaining safety for deferred aborts, we introduced the concept of \textit{split-increment timestamps}. We then argued that our algorithm, including deferred aborts, provides opacity---the gold standard of correctness for STMs. Finally, we demonstrated that \ustm{} meets or exceeds the performance of state-of-the-art single-versioned and multiversioned systems.

\begin{acks}

This work was supported in part by the National Science Foundation grant CCF-2119352, and a 
 gift from Jane Street.
 Some experiments presented in this paper were carried out using the Grid'5000 testbed, supported by a scientific interest group hosted by Inria and including CNRS, RENATER and several Universities as well as other organizations (see https://www.grid5000.fr).

\end{acks}

%\balance	
	\bibliographystyle{ACM-Reference-Format}
	% \bibliography{strings,biblio,transactions,references}
  \bibliography{bibliography/strings.bib,bibliography/main.bib,references.bib}

\begin{appendices}

\section{Privatization Safety\label{app:privsafety}}

An issue not commonly addressed by other STM systems is the interoperability of transactional and non-transactional code.
In particular, user programs might require that variables previously accessed inside a transaction be used outside of the STM context (privatization) or vice versa (publication).
The question is how can an STM provide privatization (and publication) safety.

Figure~\ref{codeexample:privatization-safety} illustrates what can go wrong if an STM system does not ensure privatization safety.
The \texttt{is\_private} variable represents whether \texttt{x} can be accessed transactionally (false), or non-transactionally (true).
Thread 1 executes a transaction that sets \texttt{is\_private} (lines 4-6), supposedly enabling a safe raw access to \texttt{x} (line 7).
The issue is that, without correct privatization, the assertion on line 8 can fail.
The pattern, known as \textit{delayed commit}, happens because Thread 2 buffered its write to \texttt{x} (very common amongst STMs) and later overwrote Thread 1's write (line 8).

\hfill
\begin{figure}[h]
\centering
\noindent
\begin{minipage}[t]{0.5\linewidth}
\begin{Verbatim}[frame=lines,numbers=none,label=Thread 1,framesep=1.5mm,xrightmargin=5.9mm]
//is_private == false


transaction([&] {
 is_private.store(true);
});
x = 7;

assert(x == 7); //Fails
\end{Verbatim}
\end{minipage}%
\begin{minipage}[t]{0.5\linewidth}
\begin{Verbatim}[frame=lines,numbers=left,xleftmargin=3mm,label=Thread 2,framesep=1.5mm]
transaction([&] {
 if (!is_private.load())
  x.store(5); //buffered
 //Validate



 //Write to x and Commit
});
\end{Verbatim}
\end{minipage}
\caption{Unsafe interleaving mixing transactional and non-transactional data accesses.}
\label{codeexample:privatization-safety}
\end{figure}

The question is how do we know when it is safe for non-transactional accesses to occur.
%possible solutions
%       non-transactional accesses use stm::load/store
One possibility would be to only access variables that may be accessed transactionally through stm::load and stm::store operations, be it inside or outside transactions.
%               compiler can provide instrumentation non-transactional accesses
This can be done explicitly or automatically through compiler instrumentation\Andre{check if gcc does this}.
%               performance loss
The downside to this approach is the loss of performance accompanied by requiring STM routines to run on all accesses to data that may be used transactionally.
%       raw accesses
%               fence
Another option is to provide an additional \texttt{fence} instruction that allows for privatization (and another for publication\Andre{does publication need a fence?}), as described in~\cite{safeprivatization}.
By requiring that users explicitly fence on privatization, we allow raw accesses to data previously accessed transactionally.
In the example depicted in Figure~\ref{codeexample:privatization-safety}, a \texttt{fence} performed by Thread 1 after line 6 would suffice.
Note that most STMs do not address privatization safety, which forces applications to either only access data inside of transactions or write their own quiescence mechanism to wait for all active transactions to finish.

The \texttt{fence} instruction resembles a fence in the C/C++ memory model.
This is because the raw access in Thread 1 can be seen as analogous to a relaxed access that is not synchronized with Thread 2.
Hence, a fence before the write to \texttt{x} on line 7 can be thought of as preventing reordering of the following raw accesses.
Data-race free semantics are required for correct usage of a fencing solution (see~\cite{safeprivatization} for more detail).

To implement the described fence instruction, we would require a barrier that waits for all currently running transactions to finish, similar to RCU patterns.
The issue then becomes that such an instruction may be too coarse-grained for some use cases.
For example, in Figure~\ref{codeexample:privatization-safety}, one can imagine that if other threads were running and operating on a disjoint set of data, Thread 1 would not need to wait for those threads to finish.
Thus, we propose a \texttt{fence(x)} instruction that only fences on a given object \texttt{x}.
The \texttt{fence(x)} instruction can be implemented much more efficiently, as can be seen in Figure~\ref{fig:alg:fencex}.
By performing a no-side-effects write on \texttt{x}, we ensure that after successfully fencing, currently running transactions will either: have finished performing their write-back phase; or will have aborted, in which case they must be ordered after the fence (e.g., in our example Thread 2 must read false when it reads \texttt{is\_private}).

\begin{figure}
\begin{Verbatim}[frame=lines,numbers=left,xleftmargin=3mm]
  template <typename T> inline void fence(T& d) {
    transaction([=] {d.store(d.load());}); }
\end{Verbatim}
\caption{\texttt{fence(x)} implementation.}
\label{fig:alg:fencex}
\end{figure}

%\section{Privatization Safety}---------

\newpage

\section{\ustm{} Code}
\label{sec:ustmcode}

\makeatletter
 \lstset{
    language=C++,
    basicstyle=\footnotesize\ttfamily, 
    numbers=left, 
    numberstyle=\tiny\color{gray},
    frame=single,
    literate={<}{<}1
            {>}{>}1
}

% (lstinputlisting) ustm.h
\begin{lstlisting}
// Supports transactions with the following interface:
//   r = ustm::transaction(func) // runs a func in transaction
//   r = ustm::transaction(func, true) // read only mode
//   r = ustm::load(x) // loads value from x
//   ustm::store(x, v) // stores v into x
//   r = ustm::New<T>(args ...) // new object of type T
//   ustm::Delete(x) // deletes x
//   ustm::fence(x) // for privatization
//   ustm::fence() // for privatization
// The types for load and store can be any relocatable type.

#ifndef USTM_H_
#define USTM_H_
#include <array>
#include <atomic>
#include <bitset>
#include <cmath>
#include <functional>
#include <iostream>
#include <thread>
#include <tuple>
#include <unordered_map>
#include <vector>
// Uses epoch-based SMR from uepoch.h.  Needed interface:
//   r = uepoch::protect(func) : runs func under protection
//   uepoch::delay(func) : delays until protected are done
//   uepoch::add_before_epoch_hook(func)
//   uepoch::thread_id()
//   uepoch::quiesce() : waits for all protected regions
#include "uepoch.h"

namespace ustm {
  // some constants
  constexpr int MaxThreads = 4096; // can be increased
  constexpr int logLocks = 22; // log_2 of number of locks
  static constexpr int filterBits = 10; // for hash filter

  using TS = size_t; // type for timestamps

  struct timeStamp {
    std::atomic<TS> ts = 1ul;
    // returns current stamp
    TS getStamp() const { return ts.load(); }
    // returns incremented stamp
    TS nextStamp(TS prevStamp=0) {
      if (prevStamp == 0) prevStamp = getStamp();
      TS stamp = ts.load(); 
      if (stamp > prevStamp) return stamp;
      for (volatile int i = 0; i < 200; i++);
      stamp = ts.load(); 
      if (stamp > prevStamp) return stamp;
      return ++ts;
    }
  };

  // generic version link
  struct version {
    version* next = nullptr;
    TS stamp = 0;
    void* addr;
    version(void* addr) : addr(addr) {}
  };

  // link including a type specific value
  template <typename T>
  struct link : version {
    T value;
    T scratch;
    link(T* addr, T value) : version{addr}, value(value) {}
  };
  
  // The lock structure.  Contains the status of the lock
  // and if using versioning a version list.
  struct alignas(16) lock {
    using Status = size_t;
    // If locked (high bit set) keeps tid of owner,
    // otherwise keeps timestamp
    std::atomic<Status> v = 1; // time starts at 1
    lock() {}
    static int getTID(Status s) { return ~(1ul << 63) & s;}
    static TS getStamp(Status s) { return s;}
    static bool isLocked(Status s) { return (s >> 63) & 1ul;}
    static Status setLock(int tid) {
      return Status((1ul << 63) | tid);}
    Status getStatus() const {
      return v.load(std::memory_order_acquire);}
    void setStamp(TS stamp) {
      v.store(stamp, std::memory_order_release); }

    std::atomic<version*> verlist = nullptr; // version list
    version* getVersionList() const { return verlist.load();}
    
    // add to version list
    void addLink(version* l, TS prevStamp) {
      l->next = verlist.load();
      l->stamp = prevStamp;
      verlist.store(l, std::memory_order_release); }

    // Returns whether succeeded or not, and if succeeded a
    // 0 stamp.  If was self locked, and the swapped out
    // timestamp otherwise
    std::pair<bool,TS> tryLock(int tid, const TS startStamp) {
      Status s = getStatus();
      if (isLocked(s)) 
        if (getTID(s) == tid) return std::pair(true, 0);
        else return std::pair(false, 0);
      if (s < startStamp &&
          v.compare_exchange_strong(s, setLock(tid))) {
        return std::pair(true, getStamp(s));
      } else return std::pair(false, 0);
    }

    // unlocks by setting status to timestamp
    void unlock(const TS ts) {
      v.store(ts, std::memory_order_release); }

    // checks that either self locked or timestamp before
    // startStamp
    bool validateRead(const TS startStamp, int tid) const {
      Status s = getStatus();
      return ((isLocked(s) && getTID(s) == tid) ||
              s < startStamp); 
    }
  };

  struct STMState;
  
  // Transaction descriptor, includes the logs, stats, and
  // other state
  struct alignas(128) Descriptor {
    bool inTransaction {false};
    bool readOnly {false};
    bool lateRead {false};
    int inConstructor {0};
    size_t tid; // thread identifier
    size_t numRetries {0};    
    TS startStamp;
    STMState* stmState;
    
    // Entry for the write log.  Contains the location, an
    // oldStamp used when locked, a pointer to the link
    // containing the value, a pointer to the data within
    // the link and the size in bytes of the value.
    struct logEntry {
      void* loc;
      TS oldStamp;
      version* link;
      void* data;
      void* scratch;
      int size;
    };
    
    // The following four members are the logs.
    std::vector<logEntry> writeLog;
    // The read log contains pointers to read locations.
    std::vector<const void*> readLog;

    // Alloc and delete logs keep pointers to functions that
    // do deletes
    std::vector<std::function<void()>> allocLog;
    std::vector<std::function<void()>> deleteLog;

    // A hash map and filter to let reads find prior writes
    std::unordered_map<void*, int> writeMap;
    std::bitset<filterBits> writeFilter;

    // used to reset between each attempt
    void reset() {
      if (writeLog.size() > 0) {
        writeFilter = 0; writeLog.clear(); writeMap.clear();
      }
      readLog.clear(); deleteLog.clear(); allocLog.clear();
      inConstructor = false; lateRead = false;
    }
  };

  template <int bits>
  size_t hash(const void* addr) {
    size_t x = ((reinterpret_cast<size_t>(addr)>>6) *
                UINT64_C(0xbf58476d1ce4e5b9));
    return x >> (64 - bits);
  }
  
  struct STMState {
    // state consists of locks, descriptors, and timestamp
    static constexpr size_t numLocks = 1ul << logLocks;
    std::array<Descriptor,MaxThreads> descriptors;
    std::array<lock,numLocks> writeLocks;
    timeStamp ts;

    // Returns a lock corresponding to a hash of the address
    inline lock& getLock(const void* addr) {
      return writeLocks[hash<logLocks>(addr)];} 

    // returns true if successful
    inline bool commitTransaction(Descriptor* myd) {
      TS endStamp;
      bool hasWrite = myd->writeLog.size() > 0;
      if (hasWrite) {
        if (myd->lateRead) return false;
        // Take locks (abort if any fail)
        for (auto& e : myd->writeLog) {
          auto [r,s] = getLock(e.loc).tryLock(myd->tid,
                                              myd->startStamp);
          if (!r) return false;
          e.oldStamp = s;
        }
        // get stamp -- this is the serialization point
        endStamp = ts.getStamp();
        // Validate the reads (abort if any fail)
        for (auto& e : myd->readLog) 
          if (!getLock(e).validateRead(myd->startStamp,
                                       myd->tid))
            return false;
        // If  here (i.e.  locks  and validates  succeeded),
        // transaction succeeded.  Apply the writes and then
        // unlock them.
        for (auto& e : myd->writeLog) {

          auto *tmp = e.scratch;
          // swap destination and data (from the link)
          std::memcpy(tmp, e.loc, e.size);
          std::memcpy(e.loc, e.data, e.size);
          std::memcpy(e.data, tmp, e.size);
          // add link to version list
          TS writeStamp = ((e.oldStamp == 0)
                           ? endStamp : e.oldStamp);
          getLock(e.loc).addLink(e.link, writeStamp);
        }
        std::atomic_thread_fence(std::memory_order_release);
        for (auto&  e : myd->writeLog) // unlock writes
         if (e.oldStamp != 0)
           getLock(e.loc).unlock(endStamp);
      }
      // Since successful, apply the deletes.
      for (auto& e : myd->deleteLog)
        uepoch::delay(std::move(e));
      myd->reset();
      if (hasWrite) ts.nextStamp(endStamp);
      myd->inTransaction = false;
      return true;
    }

    inline void abortTransaction(Descriptor* myd) {
      // Release locks and apply deletes to allocated objects.
      for (auto& e : myd->writeLog) 
        if (e.oldStamp != 0) getLock(e.loc).unlock(e.oldStamp);
      for (auto& e : myd->allocLog)
        uepoch::delay(std::move(e));
      if (++myd->numRetries % 4000000 == 0) { 
        std::cout << "ustm: too many retries: "
                  << std::endl; abort();}
      myd->reset();
      ts.nextStamp(myd->startStamp);
    }
    
    STMState() {
      for (int i=0; i < MaxThreads; i++) {
        descriptors[i].tid = i;
        descriptors[i].stmState = this;
      }
      // ensures stamp is incremented whenever epoch is
      uepoch::add_before_epoch_hook([&] { ts.nextStamp(); });
    }
    ~STMState() {}
  };
      
  extern inline Descriptor* initDescriptor() {
    static STMState stm;
    return &stm.descriptors[uepoch::thread_id()];
  }

  // thread local copy of descriptor.
  extern inline Descriptor* getDescriptor() {
    static thread_local Descriptor* d = initDescriptor();
    return d;
  }

  template <typename T>
  inline T load(T& v, bool unprotected=false) {
    Descriptor* myd = getDescriptor();
    lock& lck = myd->stmState->getLock(&v);
    auto status = lck.getStatus();

    // check if value is in the write buffer
    if (myd->writeLog.size() > 0 &&
        myd->writeFilter[hash<filterBits>(&v)]) {
      auto a = myd->writeMap.find((void*) &v);
      if (a != myd->writeMap.end()) {
        version* ptr = (myd->writeLog[(*a).second]).link;
        return (reinterpret_cast<link<T>*>(ptr))->value;
      }
    }

    version* nxt;
    bool regular = (myd->inTransaction &&
                    !myd->readOnly && !unprotected);
    if (regular) myd->readLog.push_back(&v);
    char tmp[sizeof(T)];
    T* location = reinterpret_cast<T*>(tmp);
    while (true) { 
      auto prevStatus = status;
      std::memcpy(tmp, &v, sizeof(T));
      nxt = lck.getVersionList();
      std::atomic_thread_fence(std::memory_order_acquire);
      status = lck.getStatus();
      if (prevStatus == status) {
        if (status < myd->startStamp) return *location;
        if (!lock::isLocked(status)) {
          if (!myd->inTransaction) return *location;
          break;
        }
      }
    }
    if (regular) myd->lateRead = true;
    TS stamp = lck.getStamp(status);
    // Chase down the version list to the right one
    while (stamp >= myd->startStamp && nxt != nullptr) {
      if (nxt->addr == &v)
        location = &(reinterpret_cast<link<T>*>(nxt)->value);
      stamp = nxt->stamp;
      nxt = nxt->next;
    }
    return *location;
  }

  // runs func in an transaction
  // func takes no argument and must return a value
  template<typename F>
  inline auto transaction(F&& func, bool readOnly = false) {
    Descriptor* myd = getDescriptor();
    STMState* stm = myd->stmState;
    // if nested then just run
    if (myd->inTransaction) return func(); 
    myd->inTransaction = true;
    myd->numRetries = 0;
    myd->readOnly = readOnly;
    while (true) {
      auto returnValue = uepoch::protect([&] {
        myd->startStamp = stm->ts.getStamp();
        return func();
      });
      if (stm->commitTransaction(myd))
        return returnValue;
      else stm->abortTransaction(myd);
    }
  } 

  // adds to log so can delete if aborted
  template <typename T, typename... Args>
  inline T* New(Args&&... args) {
    Descriptor* myd = getDescriptor();
    if (myd->inTransaction) {
      myd->inConstructor++;
      T* ptr = new T(args...);
      myd->allocLog.push_back([=] {delete ptr;});
      myd->inConstructor--;
      return ptr;
    } else
      return new T(args...);
  }

  // adds to log and only applies deletes on success
  template<typename T>
  inline void Delete(T* obj) {
    if (obj == nullptr) return;
    Descriptor* myd = getDescriptor();
    if (!myd->inTransaction) delete obj;
    else myd->deleteLog.push_back([=] { delete obj;});
  }

  template <typename T>
  inline void store(T& loc, const T& v) {
    Descriptor* myd = getDescriptor();
    if (myd->inConstructor > 0) loc = v;
    else if (!myd->inTransaction)
      transaction([&] {ustm::store(loc, v); return true;});
    else {
      // insert into hash filter and map so reads can find it
      myd->writeFilter[hash<filterBits>(&loc)] = 1;
      myd->writeMap[&loc] = myd->writeLog.size();
      // temporarily store value in new link
      auto* hp = ustm::New<link<T>>(&loc, v);
      ustm::Delete(hp); // delete is delayed
      myd->writeLog.push_back({
        .loc = &loc,
        .oldStamp = 0,
        .link = hp,
        .data = &hp->value,
        .scratch = &hp->scratch,
        .size = sizeof(T)
      });
    } 
  }

  template <typename T>
  inline void fence(T& loc) {
    ustm::transaction([=] {
       ustm::store(loc, ustm::load(loc));}); }

  inline void fence() { uepoch::quiesce();}
} // End namespace ustm

#endif // USTM_H_
\end{lstlisting}

\end{appendices}

\end{document}